\documentclass[aps,pra,showpacs,nofootinbib]{revtex4}

\usepackage{amsmath,latexsym,amssymb,verbatim,enumerate,graphicx,amsfonts}
\usepackage{mathrsfs,hyperref,amsthm,braket,dsfont,graphicx,float}

\usepackage{bm}
\usepackage{tikz}
\usetikzlibrary{positioning,shapes}
\usepackage{color}

\theoremstyle{plain}
\newtheorem{theorem}{Theorem}[section]
\newtheorem{corollary}[theorem]{Corollary}

\newtheorem{conjecture}[theorem]{Conjecture}
\newtheorem{remark}[theorem]{Remark}

\newcommand{\tr}{{\rm{Tr}}}
\newcommand{\cN}{{\cal N}}


\begin{document}

\title{Quantum estimation through a bottleneck}

\author{Milajiguli Rexiti\footnote{Corresponding author. Electronic address: milajiguli.milajiguli@unicam.it}}
\affiliation{Department of Mathematics and Physics, Xinjiang Agricultural University, 
\"Ur\"umqi 830053,
China}
\affiliation{School of Advanced Studies, University of Camerino, 62032 Camerino, Italy}
\author{Stefano Mancini}
\email{stefano.mancini@unicam.it}
\affiliation{School of Science and Technology, University of Camerino, 62032 Camerino, Italy}
\affiliation{INFN-Sezione di Perugia, I-06123 Perugia, Italy} 

\begin{abstract}
{We study the estimation of a single parameter characterizing families of unitary transformations acting on two systems. We consider the situation with the presence of bottleneck, i.e. only one of the systems can be measured to gather information. The estimation capabilities are related to unitaries' generators. In particular, we establish continuity of quantum Fisher information with respect to generators. Furthermore, we find conditions on the generators to achieve the same maximum quantum Fisher information we would have in the absence of bottleneck. We also discuss the usefulness of initial entanglement across the two systems as well as across multiple estimation instances.}
\end{abstract}
\pacs{03.67.-a, 03.65.Yz, 03.65.Ta}
\date{\today}
\maketitle

\section{Introduction}

{
Nowadays, it is clear that the ultimate detectability of signals and the accuracy with which their parameters can be estimated stem from quantum mechanical methods \cite{Hel}. Along this avenue, the estimation of a parameter characterizing states' unitary transformations has been widely studied  \cite{B05,H06,K07}. Ultimate limits for this case have been established by considering different strategies and by referring to the generators (optimal probe states were related to them) \cite{QM}.
In quantum mechanics unitaries are employed in ideal situations, however in practice, one has to deal with noisy states' transformation.
Then, more recently, quantum estimation has been lifted to quantum channel maps \cite{Sasaki, Fujiwara, Ji}.
The estimation of a quantum channel's parameter can be regarded as the estimation of the parameter characterizing the isometry behind it (representing its Stinespring dilation \cite{Stinespring}),
once only part of its image space is accessible (measurable). The unaccessible part is traced away and usually referred to as the environment. 
Moreover, if the dimension of the space where the isometry acts on (channel's input), is the same as the image space, the problem becomes of estimating a unitary, still with the restriction of partly 
unaccessible image space. 
Also in this context, it would be interesting to trace the estimation capabilities back to the unitaries' generators. 
This model resembles a bottleneck, a term often used in communication systems  to indicate a point in the enterprise where the flow of data is impaired since there is not enough data handling capacity to handle the current volume of data \cite{Boudec}.

Given a one-parameter family of unitaries $\{U_\alpha^{AE\to BF}\}$,  
we consider the estimation of the parameter $\alpha$ by accessing only the system $B$. 
This amounts to use the quantum channel ${\cal N}^{AE\to B}$ between $AE$ and $B$
of which $U_\alpha^{AE\to BF}$ represents the Stinespring dilation \cite{Stinespring}.
In Ref.\cite{RM19}, this scheme has been used to introduce the notion of ``privacy" in the quantum estimation framework. Here, we aim at relating the estimation capabilities, quantified by quantum Fisher information, to the unitaries' generators. 
The following issues will be addressed: Is quantum Fisher information continuous in terms of generators? Is it possible to achieve the same quantum estimation performance we would have in the absence of bottleneck? If not, what would be the gap?
}

{
We remark that besides the communication setting, it is a quite common situation where the
parameter to estimate is encoded into a larger
quantum state, while an experimentalist has access only to a
smaller subsystem (see e.g. \cite{Gambetta, Tsang, Alipour}). 
This is also inevitable in quantum field theory in curved space-time, because there are infinitely
many modes need to be traced over (see e.g. \cite{Dragan, Wang, Safranek}).
}

{
Since the quantum Fisher information will be the relevant tool, we first recall it in Section \ref{Sec:Fisher}, where we also detail the model to be studied.
Then, we establish continuity of quantum Fisher information with respect to the unitaries' generator in Section \ref{Sec:continuity}.
Conditions on the tensor product generators to achieve the same quantum Fisher information as in the absence of bottleneck are found in Section \ref{Sec:qubit1}. 
A recipe to analyze more complicated generators is illustrated in 
Section \ref{Sec:qubit2}.
As a main result, it turns out that accessing a restricted final system does not reduce estimation capabilities provided that the partial
trace of the generator over the accessed subsystem nullifies (a particular case with generator belonging
to the special unitary algebra is represented).
In these Sections, the usefulness of initial entanglement (across the two systems as well as across the multiple estimation instances) is also discussed.
Finally, in Section \ref{Sec:conclu} we draw our conclusions.}

{
\section{Basic Notions and Model}\label{Sec:Fisher}
}

In classical estimation theory, the optimal unbiased estimators of a parameter $\alpha$ are those saturating the Cramer-Rao inequality
\begin{equation}\label{eq:CR}
Var(\alpha) \geq \frac{1}{F(\alpha)},
\end{equation}
which establishes a lower bound on the mean square error (variance)
$Var(\alpha) =\mathbb{E}_{\alpha} (\hat\alpha-\alpha)^2=\mathbb{E}_{\alpha} (\hat\alpha
-\mathbb{E}_\alpha \hat\alpha)^2 $
of any unbiased estimator $\hat\alpha$. In other words, the Cramer-Rao inequality
establishes the ultimate bound on the precision of estimating the parameter $\alpha$.

In Eq.\eqref{eq:CR} $F(\alpha)$ is the Fisher Information 
defined as
\begin{equation}\label{eq:F}
F(\alpha):=\int p(\hat\alpha|\alpha) \left( \frac{\partial \ln p(\hat\alpha|\alpha)^2}{\partial \alpha}\right) d\hat\alpha=\int \frac{1}{p(\hat\alpha|\alpha)} \left( \frac{\partial p(\hat\alpha|\alpha)^2}{\partial \alpha}\right) d\hat\alpha, 
\end{equation}
where $p(\hat\alpha|\alpha)$ denotes the conditional probability of obtaining the value $\hat\alpha$ when the parameter has the value $\alpha$.

In quantum mechanics, we have  $p(\hat\alpha|\alpha)={\rm Tr}\left[\Pi(\hat\alpha)\rho(\alpha)\right]$, where ${\Pi(\hat\alpha)} $ are the elements of a positive operator-valued measure (POVM) and 
$\rho(\alpha)$ is the density operator parametrized by the quantity we want to estimate. Defining the Symmetric Logarithmic Derivative (SLD) $L_\alpha$ as the Hermitian operator satisfying 
\begin{equation}\label{eq:SLD}
\frac{L_\alpha \rho_\alpha+\rho_\alpha L_\alpha}{2}=\frac{\partial \rho_\alpha}{\partial \alpha},
\end{equation}
the Fisher Information \eqref{eq:F} can be rewritten as
\begin{equation} \label{eq:Fq}
F(\alpha)=\int \frac{{\rm Re}\left({\rm Tr}\left[\rho(\alpha)\Pi(\hat\alpha)L_\alpha\right]\right)^2}{{\rm Tr}\left[\rho(\alpha)\Pi(\hat\alpha)\right]}d\hat{\alpha}.
\end{equation}
{To evaluate the ultimate bounds to the precision of estimation, we should maximize \eqref{eq:Fq} over all} quantum measurements. However, we can easily get the following chain of inequalities
\begin{eqnarray}
F(\alpha) &\leq &\int \left| \frac{{\rm Tr}\left[\rho(\alpha)\Pi(\hat\alpha)L_\alpha\right]}{\sqrt{{\rm Tr}\left[\rho(\alpha)\Pi(\hat\alpha)\right]}}\right|^2 d\hat\alpha \label{intF1}\\
&=&\int \left|{\rm Tr} \left[
\frac{\sqrt{\rho(\alpha)}\sqrt{\Pi(\hat\alpha)}}
{\sqrt{{\rm Tr}\left[\rho(\alpha)\Pi(\hat\alpha)\right]}}
\sqrt{\Pi(\hat\alpha)}L_\alpha\sqrt{\rho(\alpha)}
\right]\right|^2 d\hat\alpha \label{intF2} \\
&\leq& \int {\rm Tr}\left[\Pi(\hat\alpha)L_\alpha \rho(\alpha)L_\alpha\right]d\hat\alpha
\label{intF3}\\
&=&{\rm Tr}\left[ \rho(\alpha)L_\alpha^2\right],\label{TrL}
\end{eqnarray}
{where the step from \eqref{intF2} to \eqref{intF3} is according to the Cauchy-Schwarz inequality applied with Hilbert-Schmidt scalar product of operators. Equations \eqref{intF1}-\eqref{TrL}
 show that the Fisher Information $F(\alpha)$ of any quantum measurement is upper} bounded by the so-called Quantum Fisher Information
\begin{equation}\label{QFI}
F(\alpha)\leq J(\alpha) :={\rm Tr}\left[ \rho(\alpha)L_\alpha^2\right]={\rm Tr}\left[\partial_\alpha \rho(\alpha)L_\alpha\right], 
\end{equation}
leading to the quantum Cramer-Rao bound \cite{Hel} 
\begin{equation}
Var(\alpha) \geq \frac{1}{J(\alpha)}.
\end{equation}
This holds true for single-shot measurement, while
for $N$ (independent) measurements the quantity $J(\alpha)$ on the r.h.s. must be multiplied by $N$.
The calculation of $J(\alpha)$ is doable because the SLD is given by a Lyapunov equation. However, it depends on the {probe states and hence should be maximized over them.} Hereafter we will indicate such a maximum by $\overline{J}$.

{A widely used model dealt with the estimation of a parameter $\alpha\in[0,2\pi]$} introduced into the system through a unitary transformation $U_\alpha=e^{-i\alpha G}$, being $G$ its generator.
In such a case the maximum Fisher information {over all probe states has been found as} \cite{QM}
\begin{equation}\label{qm}
\overline{J}=\left( \lambda_{max}-\lambda_{min}\right)^2,
\end{equation}
where $\lambda_{max},\lambda_{min}$ are the maximum and minimum eigenvalues of $G$. This is because the minimum error is achieved when the standard deviation of $G$ is maximum. In turn, this latter is achieved by preparing the probe in a state having maximum spread, i.e. equally-weighted superposition of the eigenvectors $\ket{\lambda_{max}}$ and $\ket{\lambda_{min}}$ of $G$ corresponding, respectively, to $\lambda_{max}$ and $\lambda_{min}$.\footnote{
In passing, we notice that 
the error defined in Eq.(1) of Ref.\cite{QM},
to be consistent with the results reported there,  
should have been written with a square root, i.e.
$\delta \varphi =\left\langle{\left(\varphi_{est}/| \frac{\partial \braket{\varphi_{est}}_{av}}{\partial \varphi}|-\varphi \right)^2}\right\rangle^{\frac{1}{2}}$.} 

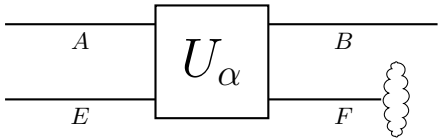
\begin{figure}[ht]
\begin{center}
\begin{tikzpicture}[scale=0.5]
\draw[thick] (0,0) -- (4,0); 
\draw[thick] (0,2) -- (4,2); 
\draw[thick] (7,2) -- (11.5,2);
\draw[thick] (7,0) -- (10,0);
\draw[thick] (4,-0.5) rectangle (7,2.5);
\node[cloud, cloud puffs=15.7, cloud ignores aspect, minimum width=0.3cm, 
minimum height=1cm, align=center, draw] (cloud) at (10.4,0){}; 
\node[left] at (0,0){$$};
\node[below] at (2,0){$E$};
\node[below] at (9,2){$B$};
\node[below] at (2,2){$A$}; 
\node[below] at (9,0){$F$};
\draw (5.5,1) node[font = \fontsize{20}{20}\sffamily\bfseries]{$U_\alpha$};
\end{tikzpicture}
\end{center}
\caption{Schematic representation of a unitary $U_\alpha:\mathscr{H}_A\otimes
\mathscr{H} _E\to\mathscr{H}_B\otimes\mathscr{H}_F$ whose parameter $\alpha$ has to be 
estimated by accessing only the system $B$.
{We refer to systems $A$ and $E$ (resp. $B$ and $F$) as the initial systems (resp. final systems). The systems $A$ and $E$ also constitute the input to the channel ${\cal N}^{AE\to B}$, while  $B$ is its output system.
}
 }\label{fig1}
\end{figure}

{ 
Suppose now to have the unitary $U_\alpha:\mathscr{H}_A\otimes
\mathscr{H} _E\to\mathscr{H}_B\otimes\mathscr{H}_F$ 
and consider the quantum channel
\begin{equation}\label{eq:calN}
\rho_{AE}\mapsto {\cal N}(\rho_{AE}) = 
{\rm Tr}_F \left[U_\alpha \rho_{AE} U_{\alpha}^\dag\right]
=\sum_\ell  {}_E\langle \ell | U_{\alpha} \rho_{AE} U_{\alpha}^\dag |\ell \rangle_E
=\sum_\ell K_\ell \rho_{AE} K_\ell^\dag,
\end{equation}
where the Kraus operators
\begin{equation}
K_\ell={}_E\langle\ell | U_{\alpha},
\end{equation}
depend on the parameter $\alpha$ 
(here $\{|\ell\rangle_E\}$ is an orthonormal basis of $\mathscr{H}_E$).

It is clear that the estimation of the parameter $\alpha$ characterizing the channel amounts to 
estimate the unitary $U_\alpha$ by accessing only the system $B$ (see Fig.\ref{fig1}).
This situation resembles a bottleneck. 
In a communications context, this happens when there isn't enough data handling capacity to handle the current volume of traffic. This is a common situation in network communication (for instance having a node with two incomes and one outcome links) \cite{Abbas}. After all, the depicted model describes a quantum multiple-access channel with two senders and one receiver \cite{QMAC}.

The idea of quantum estimation through a bottleneck gives rise to several questions, for example: Is quantum Fisher information continuous in terms of generators? Is it possible to achieve the same
quantum estimation performance we would have in the absence of bottleneck? If not what would be the gap?
Below we shall address these questions.
To simplify the treatment, we shall assume from now on 
$\mathscr{H}_A\simeq \mathscr{H}_B$ and $ \mathscr{H}_E\simeq \mathscr{H}_F$, 
as well as pure probe state $\rho$ on $AE$. 
}


\section{Continuity of quantum Fisher information}\label{Sec:continuity}

{
Quantum Fisher information has shown to 
be discontinuous in terms of the parameter to be estimated 
\cite{Dominik}.  
Discontinuities appear when, varying estimation parameter, 
 the rank of the density operator changes. 
The sudden drop is always connected to the information that
can be extracted from the change of purity and might also be 
a demonstration of a quantum phase
transition \cite{SM}.
}

{Here, on a different avenue,} we would like to address the issue of continuity of quantum Fisher information  related to {the state of} the system $B$ (see Fig.\ref{fig1}). 
This will make reliable numerical investigations of ${\overline J}_B$ whenever employed.\footnote{
Clearly, sampling a discontinuous function on a discrete set of points cannot be representative of the behavior of the function, while it can for a continuous function.
}
In particular, we would like to link this property to the generator of the unitary $U_\alpha$. {In the sense that two `close' generators (defined in some specific sense) should have, for the same probe state, `close'  quantum Fisher information.} 

The continuity property of quantum Fisher information has been established in Ref.\cite{Alireza},
{concerning both} the state $\rho(\alpha)$ and its derivative 
$\partial_\alpha\rho(\alpha)$. Here we {derive a slightly different version} of this result and then as a step further, we relate this issue to the generator of the unitary $U_\alpha$. 

\begin{theorem}\label{th:Jcontinuity}
Given two states $\rho_1(\alpha)$ and $\rho_2(\alpha)$ {on a finite dimensional Hilbert space 
$\cal H$ depending on a parameter $\alpha$, we have (dropping the dependence from 
$\alpha$ for a lighter notation):}
\begin{align}\label{eq:Jcontinuity}
\left|J(\rho_1)-J(\rho_2)\right| &\leq
\left[\frac{ \left\| \partial_\alpha \rho_1 \right\|{_2} \, \left\| \partial_\alpha \rho_2 \right\|{_2} }{2\lambda_1
(\lambda_1+\lambda_2)}
+\frac{ \left\| \partial_\alpha \rho_2 \right\|^2{_2} }{2\lambda_2
(\lambda_1+\lambda_2)}\right] \, \left\| \rho_1-\rho_2\right\|{_2}  \notag\\
&+
\left[\frac{ \left\| \partial_\alpha \rho_1 \right\|{_2} }{2\lambda_1}
+\frac{ \left\| \partial_\alpha \rho_2 \right\|{_2} }{
(\lambda_1+\lambda_2)}\right] \, \left\| \partial_\alpha\rho_1-\partial_\alpha\rho_2\right\|{_2} ,
\end{align}
where $\lambda_i\equiv\lambda_{min}(\rho_i)$. 
\end{theorem}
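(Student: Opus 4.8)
The plan is to express everything through the symmetric logarithmic derivative and to reduce the problem to norm estimates on operators solving Lyapunov equations. Using the representation $J(\rho)=\mathrm{Tr}[\partial_\alpha\rho\,L]$ from \eqref{QFI}, where the Hermitian operator $L$ solves \eqref{eq:SLD}, i.e. $\rho L+L\rho=2\,\partial_\alpha\rho$, I write $L_i$ for the SLD of $\rho_i$ and abbreviate $\partial_i\equiv\partial_\alpha\rho_i$. The first ingredient is an a priori bound on each SLD: diagonalizing $\rho_i=\sum_k p^{(i)}_k\ket{k}\bra{k}$ gives the matrix elements $(L_i)_{jk}=2(\partial_i)_{jk}/(p^{(i)}_j+p^{(i)}_k)$, and since $p^{(i)}_j+p^{(i)}_k\ge 2\lambda_i$ this yields the Hilbert--Schmidt estimate $\|L_i\|_2\le\|\partial_i\|_2/\lambda_i$. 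Finite dimensionality guarantees $\lambda_i=\lambda_{\min}(\rho_i)>0$, so $L_i$ is well defined; this is also the step from which the single-state denominators $\lambda_i$ in \eqref{eq:Jcontinuity} originate.

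Next I would split the difference of the two Fisher informations by a telescoping insertion,
\[
J(\rho_1)-J(\rho_2)=\mathrm{Tr}\bigl[L_1(\partial_1-\partial_2)\bigr]+\mathrm{Tr}\bigl[(L_1-L_2)\,\partial_2\bigr],
\]
which isolates the dependence on $\partial_1-\partial_2$ from the dependence on the states. The first trace is immediate: Cauchy--Schwarz for the Hilbert--Schmidt inner product together with the SLD bound gives $|\mathrm{Tr}[L_1(\partial_1-\partial_2)]|\le\|L_1\|_2\,\|\partial_1-\partial_2\|_2\le(\|\partial_1\|_2/\lambda_1)\,\|\partial_1-\partial_2\|_2$, already of the advertised form. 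The second trace is the genuinely delicate one, since bounding it requires controlling $\|L_1-L_2\|_2$, and the SLD depends on the state non-linearly through the inverse of the Lyapunov superoperator.

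The key technical step, and the main obstacle, is therefore to linearize this dependence. Subtracting the two defining equations and regrouping, I would derive the cross Lyapunov identity
\[
\rho_1(L_1-L_2)+(L_1-L_2)\rho_2=2(\partial_1-\partial_2)-L_1(\rho_1-\rho_2)-(\rho_1-\rho_2)L_2 .
\]
Read in the eigenbasis of $\rho_1$ for the left index and of $\rho_2$ for the right index, the solution has matrix elements divided by $p^{(1)}_j+p^{(2)}_k\ge\lambda_1+\lambda_2$, which is exactly the mechanism producing the mixed denominator $\lambda_1+\lambda_2$ throughout \eqref{eq:Jcontinuity}. Taking Hilbert--Schmidt norms of the right-hand side, using $\|AB\|_2\le\|A\|_\infty\|B\|_2\le\|A\|_2\|B\|_2$, and inserting the SLD estimate gives
\[
\|L_1-L_2\|_2\le\frac{1}{\lambda_1+\lambda_2}\left[\,2\|\partial_1-\partial_2\|_2+\left(\frac{\|\partial_1\|_2}{\lambda_1}+\frac{\|\partial_2\|_2}{\lambda_2}\right)\|\rho_1-\rho_2\|_2\right].
\]

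Finally I would substitute this into $|\mathrm{Tr}[(L_1-L_2)\partial_2]|\le\|L_1-L_2\|_2\,\|\partial_2\|_2$, add the contribution of the first trace, and collect the coefficients multiplying $\|\rho_1-\rho_2\|_2$ and $\|\partial_1-\partial_2\|_2$. This reproduces the two bracketed coefficients of \eqref{eq:Jcontinuity}: the coefficient of $\|\rho_1-\rho_2\|_2$ organizes as $\frac{\|\partial_2\|_2}{\lambda_1+\lambda_2}\bigl(\frac{\|\partial_1\|_2}{\lambda_1}+\frac{\|\partial_2\|_2}{\lambda_2}\bigr)$, and that of $\|\partial_1-\partial_2\|_2$ as $\frac{\|\partial_1\|_2}{\lambda_1}+\frac{\|\partial_2\|_2}{\lambda_1+\lambda_2}$, matching the stated structure with the asymmetric roles of $\rho_1$ and $\rho_2$ intact. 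The remaining numerical prefactors (the overall factors of $\tfrac12$) depend on the precise norm inequalities chosen in the two trace estimates, and may be sharpened by exploiting that $\partial_\alpha\rho_i$ and $\rho_1-\rho_2$ are traceless when passing between operator and Hilbert--Schmidt norms; in all cases the bound remains finite as long as both states are full rank.
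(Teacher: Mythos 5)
Your proof is correct, and it takes a genuinely different route from the paper's. The paper works with the integral representation $L_i=2\int_0^\infty e^{-\rho_i t}(\partial_\alpha\rho_i)e^{-\rho_i t}\,dt$, rewrites $J(\rho_1)-J(\rho_2)$ as an integral of $\tr\left[(e^{-\rho_1 t}\partial_\alpha\rho_1)^2-(e^{-\rho_2 t}\partial_\alpha\rho_2)^2\right]$, factors this as (difference)$\times$(sum) using the vanishing trace of a commutator, applies Cauchy--Schwarz in Hilbert--Schmidt norm, bounds the exponentials by $e^{-\lambda_i t}$, and produces the denominators $2\lambda_1$, $\lambda_1+\lambda_2$, $2\lambda_i(\lambda_1+\lambda_2)$ by explicitly integrating over $t$. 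You instead stay purely algebraic: the telescoping split $\tr[L_1(\partial_1-\partial_2)]+\tr[(L_1-L_2)\partial_2]$, the a priori bound $\|L_i\|_2\le\|\partial_i\|_2/\lambda_i$ from the eigenbasis formula, and, as the key step, the cross Lyapunov identity for $L_1-L_2$ inverted in the mixed eigenbasis where $p^{(1)}_j+p^{(2)}_k\ge\lambda_1+\lambda_2$. Your mixed-basis inversion is exactly the algebraic counterpart of the superoperator $Y\mapsto\int_0^\infty e^{-\rho_1 t}\,Y\,e^{-\rho_2 t}\,dt$ implicit in the paper, so the two arguments are dual; yours is more elementary, makes the origin of each denominator transparent, and yields as a byproduct an explicit continuity bound on the SLD itself, $\|L_1-L_2\|_2$, which the paper never isolates and which is of independent interest. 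On constants: carried through, your bound is exactly twice the right-hand side of \eqref{eq:Jcontinuity} --- your coefficient of $\|\partial_\alpha\rho_1-\partial_\alpha\rho_2\|_2$ is $\|\partial_1\|_2/\lambda_1+2\|\partial_2\|_2/(\lambda_1+\lambda_2)$ (note the factor $2$ inherited from the term $2(\partial_1-\partial_2)$ in the cross Lyapunov equation, which your final recap drops), and the coefficient of $\|\rho_1-\rho_2\|_2$ is likewise doubled. This is not a defect relative to the paper: the paper's own derivation, once the overall prefactor $2$ coming from $J=\tr[L\,\partial_\alpha\rho]$ with $L=2\int_0^\infty\cdots\,dt$ is kept through the final $t$-integration, gives precisely the same doubled constants, and that factor appears to have been dropped silently in the published statement. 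So both proofs establish the inequality up to the same overall factor of $2$, and your closing caveat about sharpening prefactors is apt.
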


\begin{proof}
Let us start noticing that the solution of the Lyapunov Eq.\eqref{eq:SLD} can be written as 
\begin{equation}
L=2\int_0^\infty e^{-\rho t}\left(\partial_\alpha\rho\right) e^{-\rho t} dt,
\end{equation}
thus we have 
\begin{align}
\left|J(\rho_1)-J(\rho_2)\right| &= \left| \tr\left(L_1\partial{_\alpha}\rho_1\right)-
 \tr\left(L_2\partial{_\alpha}\rho_2\right) \right| \\
 &= \left| \tr\left\{ 2\int_0^\infty
\left[ \left(e^{-\rho_1 t}\partial{_\alpha}\rho_1\right)^2-
\left(e^{-\rho_2 t}\partial{_\alpha}\rho_2\right)^2 \right]dt\right\} \right| \\
 &\leq 2\int_0^\infty \left| \tr
\left[ \left(e^{-\rho_1 t}\partial{_\alpha}\rho_1\right)^2-
\left(e^{-\rho_2 t}\partial{_\alpha}\rho_2\right)^2 \right] \right| \, {dt} 
\label{Jdiff3}\\
&{= 2\int_0^\infty \left| \tr
\left[ \left(e^{-\rho_1 t}\partial{_\alpha}\rho_1-e^{-\rho_2 t}\partial{_\alpha}\rho_2\right)
\left(e^{-\rho_1 t}\partial{_\alpha}\rho_1+e^{-\rho_2 t}\partial{_\alpha}\rho_2\right)^2 
\right] \right|\, dt} \label{Jdiff4}\\
 &\leq 2\int_0^\infty \left\| 
e^{-\rho_1 t}\partial{_\alpha}\rho_1 - e^{-\rho_2 t}\partial{_\alpha}\rho_2 \right\|{_2}  \,
 \left\| e^{-\rho_1 t}\partial{_\alpha}\rho_1 + e^{-\rho_2 t}\partial{_\alpha}\rho_2 
 \right\|{_2} \, dt,
 \label{eq:integrand}
\end{align}
where {from \eqref{Jdiff3} to \eqref{Jdiff4}  we used the fact that the trace of a commutator vanishes, and from \eqref{Jdiff4} to \eqref{eq:integrand} we used
the Cauchy-Schwarz inequality applied with Hilbert-Schmidt scalar product of operators.} Let us now analyze separately the two terms entering in the integral 
\eqref{eq:integrand}. 
First, it is
\begin{align}
\left\| e^{-\rho_1 t}\partial{_\alpha}\rho_1 - e^{-\rho_2 t}\partial{_\alpha}\rho_2 \right\|{_2}
&=\left\| e^{-\rho_1 t}\partial{_\alpha}\rho_1 - e^{-\rho_1 t}\partial_\alpha\rho_2 
+ e^{-\rho_1 t}\partial_\alpha\rho_2-e^{-\rho_2 t}\partial{_\alpha}\rho_2 \right\|{_2}
\label{eq:eq0} \\
&\leq \left\| e^{-\rho_1 t} \right\|{_2} \,
\left\|\partial{_\alpha}\rho_1 - \partial_\alpha\rho_2\right\|{_2} 
+ \left\| e^{-\rho_1 t}-e^{-\rho_2 t} \right\|{_2}\, \left\|\partial{_\alpha}\rho_2 \right\|{_2} 
\label{eq:eq} \\
&\leq  \left\| e^{-\rho_1 t} \right\|{_2} \,
\left\|\partial{_\alpha}\rho_1 - \partial_\alpha\rho_2\right\|{_2} 
+ \left\|\rho_1-\rho_2\right\|{_2}\, t\,\int_0^1
\left\| e^{-\rho_1 t s} \right\|{_2} \, \left\|e^{-\rho_2 t (1-s)} \right\|{_2}\, ds\, 
\left\|\partial{_\alpha}\rho_2 \right\|{_2} \label{eq:diseq}\\
&{\leq  \left\| e^{-\rho_1 t} \right\|{_1} \,
\left\|\partial{_\alpha}\rho_1 - \partial_\alpha\rho_2\right\|{_2} 
+ \left\|\rho_1-\rho_2\right\|{_2}\, t\,\int_0^1
\left\| e^{-\rho_1 t s} \right\|{_1} \, \left\|e^{-\rho_2 t (1-s)} \right\|{_1}\, ds\, 
\left\|\partial{_\alpha}\rho_2 \right\|{_2} \label{eq:diseq1}}\\
&\leq  e^{-\lambda_1 t} 
\left\|\partial{_\alpha}\rho_1 - \partial_\alpha\rho_2\right\|{_2} 
+ \left\|\rho_1-\rho_2\right\|{_2}\, t\,\int_0^1
 e^{-\lambda_1 t s} \, e^{-\lambda_2 t (1-s)} \, ds\, 
\left\|\partial{_\alpha}\rho_2 \right\|{_2} \label{eq:diseq2} \\
&\leq  e^{-\lambda_1 t} 
\left\|\partial{_\alpha}\rho_1 - \partial_\alpha\rho_2\right\|{_2} 
+ \left\|\rho_1-\rho_2\right\|{_2}\, 
\frac{ e^{-\lambda_1 t} - e^{-\lambda_2 t}}{\lambda_2-\lambda_1} \, 
\left\|\partial{_\alpha}\rho_2 \right\|{_2},
\label{eq:1ineq}
\end{align}
{where, in going from \eqref{eq:eq0} to \eqref{eq:eq} we used the triangular inequality together with the sub-multiplicativity of Shatten's norms. 
From \eqref{eq:eq} to \eqref{eq:diseq} we used the property}
\begin{align}
\left\|e^{A}-e^{B}\right\|{_p} & = \left\| (A-B) \int_0^1 e^{(A-B)s} ds \,e^{B} \right\|{_p}  
\notag\\
& \leq \left\| (A-B) \right\|{_p}  \int_0^1 \left\|e^{As}\right\|{_p}  
\, \left\|e^{B(1-s)}\right\|{_p}  \, ds,  
\label{eq:propertyexp}
\end{align}
{valid for all $p$ such that $1\leq p \leq \infty$.
Next, the fact that $\|\cdot\|_1 \geq \|\cdot\|_2$ is employed from \eqref{eq:diseq} to \eqref{eq:diseq1}. Finally,  \eqref{eq:diseq2} and \eqref{eq:1ineq} immediately follow from the property of trace norm and by integration.}

For the other term in the integrand of Eq.\eqref{eq:integrand}, we have 
\begin{align}
\left\| e^{-\rho_1 t}\partial{_\alpha}\rho_1 + e^{-\rho_2 t}\partial{_\alpha}\rho_2 
\right\|{_2}
&\leq e^{-\lambda_1 t} \left\|\partial_\alpha\rho_1\right\|{_2}
+e^{-\lambda_2 t} \left\|\partial_\alpha\rho_2\right\|{_2}.
\label{eq:2ineq}
\end{align}
{At the end, plugging} \eqref{eq:1ineq} and \eqref{eq:2ineq} into \eqref{eq:integrand} we obtain
\begin{align}
\left|J(\rho_1)-J(\rho_2)\right|
\leq 2\int_0^\infty &\left[ 
e^{-\lambda_1 t} \left\|\partial{_\alpha}\rho_1 - \partial{_\alpha}\rho_2 \right\|{_2} 
+\left\|\rho_1-\rho_2\right\|{_2}
\frac{e^{-\lambda_1 t}-e^{-\lambda_2 t}}{\lambda_2-\lambda_1} \left\|\partial{_\alpha}\rho_2 \right\|{_2} \right] 
\notag\\
&\times\left[ 
e^{-\lambda_1 t} \left\|\partial{_\alpha}\rho_1\right\|{_2} 
+e^{-\lambda_2 t} \left\|\partial{_\alpha}\rho_2\right\|{_2} 
\right]dt,
\end{align}  
and after performing the integration we arrive at the desired result.
\end{proof}

\begin{corollary}\label{cor:continuityG}
{By referring to \eqref{eq:calN}, given $\rho_i(\alpha)=\tr_E\left[U_i(\alpha) \rho U_i^\dag(\alpha)\right] \equiv {\cal N}_i(\rho)$},
with $U_{i}(\alpha)=e^{-i\alpha G_i}$, 
we have
{(dropping the dependence from 
$\alpha$ for a lighter notation):}
\begin{align}
\left|J(\rho_1)-J(\rho_2)\right| &\leq
2\pi \left\{ C_1+C_2 \left({\rm dim}{\cal H}_E\right) 
\left[1+ 2\pi \left( \left\| G_1\right\|{_2} +  \left\| G_2\right\|{_2} \right) \right]\right\}
\left\| G_1 - G_2 \right\|{_2}
\end{align}
where, by referring to Theorem \ref{th:Jcontinuity}, we set 
\begin{align}
C_1 &:=
\left[\frac{ \left\| \partial_\alpha \rho_1 \right\|{_2} \, \left\| \partial_\alpha \rho_2 \right\|{_2}}{2\lambda_1
(\lambda_1+\lambda_2)}
+\frac{ \left\| \partial_\alpha \rho_2 \right\|^2{_2}}{2\lambda_2
(\lambda_1+\lambda_2)}\right], \\
C_2&:=
\left[\frac{ \left\| \partial_\alpha \rho_1 \right\|{_2}}{2\lambda_1}
+\frac{ \left\| \partial_\alpha \rho_2 \right\|{_2}}{
(\lambda_1+\lambda_2)}\right].
\end{align}
\end{corollary}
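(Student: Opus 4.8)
The plan is to read Corollary~\ref{cor:continuityG} as a genuine corollary of Theorem~\ref{th:Jcontinuity}: the bound \eqref{eq:Jcontinuity} already expresses $|J(\rho_1)-J(\rho_2)|$ as $C_1\|\rho_1-\rho_2\|_2+C_2\|\partial_\alpha\rho_1-\partial_\alpha\rho_2\|_2$, so it suffices to control the two Hilbert--Schmidt distances $\|\rho_1-\rho_2\|_2$ and $\|\partial_\alpha\rho_1-\partial_\alpha\rho_2\|_2$ by multiples of $\|G_1-G_2\|_2$ and then substitute. To that end I would introduce the dilated pure states $\sigma_i:=U_i\rho U_i^\dagger$ on $AE$, so that $\rho_i=\tr_E\sigma_i$ by \eqref{eq:calN}, and record the elementary identities $\partial_\alpha U_i=-iG_iU_i$, whence $\partial_\alpha\sigma_i=-i[G_i,\sigma_i]$ and $\partial_\alpha\rho_i=-i\,\tr_E[G_i,\sigma_i]$. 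This reduces everything to estimating differences of the $U_i$ and of the commutators $[G_i,\sigma_i]$ after a partial trace.

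For the state term I would deliberately descend to the trace norm so as to avoid any dimensional factor: since the partial trace is a contraction in $\|\cdot\|_1$ and since $\|\cdot\|_2\le\|\cdot\|_1$, one has $\|\rho_1-\rho_2\|_2\le\|\sigma_1-\sigma_2\|_1$. Writing $\sigma_1-\sigma_2=U_1\rho(U_1-U_2)^\dagger+(U_1-U_2)\rho\,U_2^\dagger$ and using unitary invariance together with $\|\rho\|_1=1$ (purity), this is governed by $\|U_1-U_2\|_\infty$. The passage from $\|U_1-U_2\|_\infty$ to $\|G_1-G_2\|_2$ is exactly the exponential-difference estimate \eqref{eq:propertyexp} already used in the proof of Theorem~\ref{th:Jcontinuity}: taking $A=-i\alpha G_1$, $B=-i\alpha G_2$, all the factors $\|e^{As}\|_\infty,\|e^{B(1-s)}\|_\infty$ equal $1$ because the exponentials are unitary, the $s$-integral is trivial, and one is left with $\alpha\|G_1-G_2\|_\infty\le 2\pi\|G_1-G_2\|_2$ after invoking $\alpha\in[0,2\pi]$ and $\|\cdot\|_\infty\le\|\cdot\|_2$. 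Collecting the numerical constants produces the first, dimension-free contribution proportional to $C_1\|G_1-G_2\|_2$.

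For the derivative term I would instead remain in the Hilbert--Schmidt norm, since this is precisely where $\dim\mathcal{H}_E$ enters. The key observation is the crude but dimension-revealing bound $\|\tr_E Y\|_2\le(\dim\mathcal{H}_E)\,\|Y\|_2$, obtained by writing $\tr_E Y=\sum_\ell\langle\ell|Y|\ell\rangle$ and applying the triangle inequality to the $\dim\mathcal{H}_E$ diagonal blocks, each of Hilbert--Schmidt norm at most $\|Y\|_2$ (as $\|Y\|_2^2=\sum_{\ell,\ell'}\|\langle\ell|Y|\ell'\rangle\|_2^2$). Applying this with $Y=[G_1,\sigma_1]-[G_2,\sigma_2]$ and then performing the symmetric splitting $[G_1,\sigma_1]-[G_2,\sigma_2]=[G_1-G_2,\tfrac{\sigma_1+\sigma_2}{2}]+[\tfrac{G_1+G_2}{2},\sigma_1-\sigma_2]$, I would estimate each commutator by $\|[X,Z]\|_2\le 2\|X\|_\infty\|Z\|_2\le 2\|X\|_2\|Z\|_2$. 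The first piece, using $\|\sigma_i\|_2=1$, is linear in $\|G_1-G_2\|_2$ and yields the ``$1$'' inside the bracket; the symmetric combination $\tfrac{G_1+G_2}{2}$ in the second piece is what legitimately generates the symmetric factor $\|G_1\|_2+\|G_2\|_2$, which then multiplies the reused estimate $\|\sigma_1-\sigma_2\|_2\lesssim 2\pi\|G_1-G_2\|_2$ from the previous paragraph. Multiplying by $\dim\mathcal{H}_E$ and by $C_2$, and adding the two contributions, completes the proof after collecting constants.

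The main obstacle is the derivative term. One must recognise that $\partial_\alpha\rho_i$ is a partially traced commutator with the generator, arrange the splitting so that the result stays first order in $\|G_1-G_2\|_2$ (the genuinely second-order piece $[G_1-G_2,\sigma_1-\sigma_2]$ being absorbed into the $\|G_i\|_2$ factors precisely through the bound on $\|\sigma_1-\sigma_2\|_2$), and--most importantly--choose the two partial-trace estimates consistently with the statement: the trace-norm contraction keeps the $\rho_1-\rho_2$ term free of $\dim\mathcal{H}_E$, whereas the coarse block bound for $\partial_\alpha\rho_1-\partial_\alpha\rho_2$ is what introduces the $\dim\mathcal{H}_E$ appearing in front of $C_2$.
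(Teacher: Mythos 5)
Your proposal is correct, and although it follows the same two-step skeleton as the paper (bound $\|\rho_1-\rho_2\|_2$ and $\|\partial_\alpha\rho_1-\partial_\alpha\rho_2\|_2$ by multiples of $\|G_1-G_2\|_2$, then substitute into \eqref{eq:Jcontinuity} of Theorem \ref{th:Jcontinuity}), it executes both steps by genuinely different means. For the state term, the paper invokes channel-level machinery: the induced trace norm is dominated by the diamond norm \eqref{eq:diamondineq}, which is controlled by the continuity of the Stinespring representation \cite{KSW} in \eqref{eq:contS}, and the infimum over $V^F$ is then discarded at $V^F=I$. You bypass this entirely: telescoping $U_1\rho U_1^\dag-U_2\rho U_2^\dag=U_1\rho(U_1-U_2)^\dag+(U_1-U_2)\rho U_2^\dag$, using H\"older, $\|\rho\|_1=1$, and contractivity of the partial trace in trace norm, you reach the same intermediate quantity $2\|U_1-U_2\|_\infty$, after which both arguments coincide via \eqref{eq:propertyexp} (with unitary exponentials, so the $s$-integral is trivial). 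This is more elementary and loses nothing in the constant. For the derivative term, you use the same key lemma as the paper, namely $\|{\rm Tr}_E O\|_2\leq\left({\rm dim}{\cal H}_E\right)\|O\|_2$ --- cited there from \cite{Rastegin}, proved by you from $\|O\|_2^2=\sum_{\ell,\ell'}\|\langle\ell|O|\ell'\rangle\|_2^2$, a nice self-contained touch --- but a different decomposition: the paper adds and subtracts $G_2U_1\rho U_1^\dag$ and $U_2\rho U_2^\dag G_1$ (its Eq.~\eqref{eq:G1G2}), while you split symmetrically, $[G_1,\omega_1]-[G_2,\omega_2]=[G_1-G_2,\tfrac{1}{2}(\omega_1+\omega_2)]+[\tfrac{1}{2}(G_1+G_2),\omega_1-\omega_2]$ with $\omega_i:=U_i\rho U_i^\dag$. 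Both splittings keep everything first order in $\|G_1-G_2\|_2$; yours even yields the slightly tighter inner factor $1+2\pi\left(\|G_1\|_2+\|G_2\|_2\right)$ where the paper's chain gives $1+4\pi\left(\|G_1\|_2+\|G_2\|_2\right)$ before rounding up. One bookkeeping caveat that affects you and the paper equally: with $\alpha\in[0,2\pi]$ the state-term contribution is $4\pi C_1\|G_1-G_2\|_2$ (from $2\alpha\leq 4\pi$), not $2\pi C_1\|G_1-G_2\|_2$; since the paper's own chain \eqref{eq:diamondineq}--\eqref{eq:UGineq} gives the same $4\pi$, this is a slip in the corollary's stated constant rather than a gap in your argument --- just avoid implying that your constants reproduce the statement exactly.
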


\begin{proof}
Concerning the first term at r.h.s. of Eq.\eqref{eq:Jcontinuity}, we have
\begin{align}
\left\|\rho_1-\rho_2\right\|{_2} &\leq  
\left\|\rho_1-\rho_2\right\|_1 \\
& \leq \left\|\cN_1- \cN_2\right\|_{\diamond}
\label{eq:diamondineq}\\
& \leq 2\inf_{V^F} \left\|\left(I^B\otimes V^F\right) U_1- U_2\right\|{_\infty}
\label{eq:contS}\\
& \leq 2\left\| U_1- U_2\right\|{_\infty}\\
& \leq 2 \alpha \left\| G_1- G_2\right\|{_\infty}\label{eq:UGineqprend}\\
& \leq 2 \alpha \left\| G_1- G_2\right\|{_2},
\label{eq:UGineq}
\end{align}
where \eqref{eq:diamondineq} follows from the fact that the diamond norm of a superoperator 
upper bounds its induced trace norm (see e.g. \cite{KMWY}), 
\eqref{eq:contS} comes from the continuity of the Stinespring dilation \cite{KSW},
and for \eqref{eq:UGineqprend} we have used the property \eqref{eq:propertyexp}. {Finally 
\eqref{eq:UGineq} results from $\|\cdot\|_p \leq \|\cdot\|_q$ for $p\geq q$.}

Regarding the second term at r.h.s. of Eq.\eqref{eq:Jcontinuity}, it is
\begin{align}
\left\|\partial_\alpha\rho_1- \partial_\alpha\rho_2\right\|{_2}
&=\left\| \tr_E\left(G_1 U_1\rho U_1^\dag- U_1\rho U_1^\dag G_1\right)
- \tr_E\left(G_2 U_2\rho U_2^\dag- U_2\rho U_2^\dag G_2\right) \right\|{_2} \\
&\leq  \, {\rm dim}{\cal H}_E \left\| G_1 U_1\rho U_1^\dag- U_1\rho U_1^\dag G_1
- G_2 U_2\rho U_2^\dag+ U_2\rho U_2^\dag G_2 \right\|{_2} 
\label{eq:rast}\\
&\leq  \, {\rm dim}{\cal H}_E \left\{ 
2\left\|G_1-G_2\right\|{_2} 
+\left(\left\| G_1\right\|{_2} + \left\| G_2\right\|{_2} \right) \, 
\left\| U_1\rho U_1^\dag- U_2\rho U_2^\dag\right\|{_2} \right\}
\label{eq:G1G2}\\
&\leq 2 \, {\rm dim}{\cal H}_E \left\{ 
\left\|G_1-G_2\right\|{_2} 
+ 2\left(\left\| G_1\right\|{_2} + \left\| G_2\right\|{_2} \right) \, \left\| U_1- U_2\right\|{_2} 
\right\}
\label{eq:UminusU}\\
&\leq 2 \, {\rm dim}{\cal H}_E \,  \left\|G_1-G_2\right\|{_2} 
\left\{ 1 + 2 \alpha \left(\left\| G_1\right\|{_2} + \left\| G_2\right\|{_2} \right) \right\}.
\label{eq:bounddiffderiv}
\end{align}
Eq.\eqref{eq:rast} is obtained by noticing that for any operator $O$ 
in ${\cal H}_A \otimes {\cal H}_E$,
we have $\|\tr_E O\|{_2} \leq  {\rm dim}{\cal H}_E \|O\|{_2}$ (see \cite{Rastegin}).
Eq.\eqref{eq:G1G2} follows by adding and subtracting terms $G_2U_1\rho U_1^\dag$ 
and $U_2\rho U_2^\dag G_1$ to the previous line. 

Finally, by inserting Eqs.\eqref{eq:UGineq}, \eqref{eq:bounddiffderiv} into \eqref{eq:Jcontinuity} and taking into account that $\alpha\in[0,2\pi]$ we get the desired result.
\end{proof}


\section{Two-qubit unitaries with tensor product generators}\label{Sec:qubit1}

Below we shall consider $\mathscr{H}_A\simeq\mathscr{H} _E\simeq\mathbb{C}^2$ with the aim of finding $\overline{J}_B$ and compare it to $\overline{J}_{BF}$ when $G$ can be written as $G_1\otimes G_2$.

Quite generally we can write 
\begin{align}
\label{eq:G1}
G_1&=\left(m_1 \sigma_1+m_2 \sigma_2+m_3 \sigma_3\right)+t_1 \sigma_0
\equiv\boldsymbol{\hat m  \cdot \sigma}+t_1 \sigma_0, \quad t_1\in\mathbb{R},\\
G_2&=\left (n_1 \sigma_1+n_2 \sigma_2+n_3 \sigma_3 \right)+t_2  \sigma_0
\equiv\boldsymbol{\hat n  \cdot \sigma}+t_2 \sigma_0, \quad t_2\in\mathbb{R},
\label{eq:G2}
\end{align}
where $\sigma_1,\sigma_2,\sigma_3$ are the Pauli operators and $\sigma_0=I$. Actually, {with no loss of generality} we can assume $m_1^2+m_2^2+m_3^2=n_1^2+n_2^2+n_3^2=1$.\footnote{We can always factor out e.g. 
$\|\hat{\boldsymbol{m}}\|$ from $G_1$, which will cause a rescaling of the parameter $t_1$, and incorporate it into the parameter $\alpha$. }

\begin{theorem}\label{th:D0}
Given a family of unitaries $U_\alpha=e^{-i\alpha G_1 \otimes G_2}$ with $G_1,G_2$ as in \eqref{eq:G1}, \eqref{eq:G2}, 
{we have} $\overline{J}_B=\overline{J}_{BF}$ iff $|t_1|\leq \min(1,|t_2|)$.
\end{theorem}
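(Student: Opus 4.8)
The plan is to reduce the statement to a comparison of two explicitly computable numbers, $\overline{J}_{BF}$ and $\overline{J}_B$. Since the probe is pure and $BF$ is fully accessible, $U_\alpha=e^{-i\alpha G_1\otimes G_2}$ acts on a pure state, so by \eqref{qm} one has $\overline{J}_{BF}=(\lambda_{\max}-\lambda_{\min})^2$, where the $\lambda$'s are the eigenvalues of $G_1\otimes G_2$. As $G_1$ (resp.\ $G_2$) has eigenvalues $t_1\pm1$ (resp.\ $t_2\pm1$), these are the four products $(t_1\pm1)(t_2\pm1)$. First I would diagonalize both $G_1$ and $G_2$ by local unitaries on $A$ and $E$; this leaves both $\overline{J}_{BF}$ and $\overline{J}_B$ unchanged, since such unitaries amount to a relabelling of the probe (over which we optimize) composed with local unitaries on $B$ and on the traced-out $F$. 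A short case analysis on the signs of $t_i\pm1$ (reducible to $t_1,t_2\ge0$ by symmetry) then shows that $\lambda_{\max}-\lambda_{\min}=2(1+|t_2|)$ exactly when the extremal eigenvalues share the same $G_2$-eigenindex, which happens iff $|t_1|\le\min(1,|t_2|)$, while $\lambda_{\max}-\lambda_{\min}>2(1+|t_2|)$ otherwise. Thus $\overline{J}_{BF}=4(1+|t_2|)^2$ precisely on the claimed region and is strictly larger off it.

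The theorem therefore reduces to the identity $\overline{J}_B=4(1+|t_2|)^2$, notably \emph{independent} of $t_1$. Writing the probe as $|\psi\rangle=|0\rangle_A|\phi_0\rangle_E+|1\rangle_A|\phi_1\rangle_E$ with $G_1$ diagonal, one checks directly that $\rho_B(\alpha)$ is a qubit state whose diagonal is $\alpha$-independent and whose off-diagonal entry is $\langle\phi_1|e^{-2i\alpha G_2}|\phi_0\rangle$, the factor $2$ being the spectral gap of $G_1$. For the lower bound I would exhibit the product probe $\tfrac{1}{\sqrt2}(|0\rangle+|1\rangle)_A\otimes|e\rangle_E$, with $|e\rangle$ the eigenvector of $G_2$ of largest modulus $1+|t_2|$: then $\rho_B(\alpha)$ stays pure and rotates at frequency $2(1+|t_2|)$, giving $J_B=4(1+|t_2|)^2$ and hence $\overline{J}_B\ge 4(1+|t_2|)^2$.

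The crux is the matching upper bound $\overline{J}_B\le 4(1+|t_2|)^2$, i.e.\ ruling out any advantage from $A$--$E$ entanglement. Here I would avoid a direct optimization of the qubit QFI over entangled probes (which is unwieldy) and instead combine monotonicity under partial trace with the gauge freedom of purifications. With $G_1=\mathrm{diag}(t_1+1,t_1-1)$, for any real $c$ the vector $|\Psi_\alpha\rangle:=e^{-i\alpha\tilde H}|\psi\rangle$, where $\tilde H:=|0\rangle\langle0|\otimes(2-c)G_2-|1\rangle\langle1|\otimes cG_2$, differs from $U_\alpha|\psi\rangle$ only by a unitary acting on $F$ alone, hence $\mathrm{Tr}_F|\Psi_\alpha\rangle\langle\Psi_\alpha|=\rho_B(\alpha)$. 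Data processing under the ($\alpha$-independent) map $\mathrm{Tr}_F$ then gives $J_B\le 4\,\mathrm{Var}_{|\psi\rangle}(\tilde H)\le 4\langle\psi|\tilde H^2|\psi\rangle$. Since $\tilde H^2=|0\rangle\langle0|\otimes(2-c)^2G_2^2+|1\rangle\langle1|\otimes c^2G_2^2$ and $\|G_2^2\|_\infty=(1+|t_2|)^2$, writing $p=\langle\phi_0|\phi_0\rangle$ one obtains $\langle\psi|\tilde H^2|\psi\rangle\le(1+|t_2|)^2\big[(2-c)^2p+c^2(1-p)\big]$. Choosing $c=2p$ minimizes the bracket to $4p(1-p)\le1$, so $J_B\le 4(1+|t_2|)^2$ for every probe and every $\alpha$. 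Together with the lower bound this yields $\overline{J}_B=4(1+|t_2|)^2$, and comparison with $\overline{J}_{BF}$ from the first paragraph gives the stated equivalence. The main obstacle is exactly this upper bound; the decisive step is the $c=2p$ choice of purification, which turns the entangled-probe optimization into a one-line operator-norm estimate.
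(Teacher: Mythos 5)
Your proof is correct, and its crucial step takes a genuinely different route from the paper's. The paper proceeds by brute-force computation: it writes a generic probe in the joint eigenbasis \eqref{eq:input}, computes $\rho_B$ and the closed-form Fisher information \eqref{eq:JBsimple}, argues that attaining \eqref{eq:maxJBF} uniformly in $\alpha$ forces one coefficient $C_{ij}$ to vanish, maximizes over that subclass to obtain \eqref{eq:maxcond}, and then reads the ``iff'' off the gap formula \eqref{eq:gap1}. You instead establish $\overline{J}_B=4\left(1+|t_2|\right)^2$ without ever writing the QFI of a mixed qubit state: the lower bound comes from an explicit product probe (the same optimal states as in \eqref{eq:maxcond}), and the upper bound from the gauge freedom of the generator modulo terms acting on $F$ alone — your identity $G_1\otimes G_2=\tilde H+I\otimes(t_1-1+c)G_2$, with the two commuting summands, is exactly what makes ${\rm Tr}_F\left[\,|\Psi_\alpha\rangle\langle\Psi_\alpha|\,\right]=\rho_B(\alpha)$ — followed by monotonicity of the QFI under the $\alpha$-independent map ${\rm Tr}_F$ and the variance bound $J\leq 4\langle\tilde H^2\rangle$ (your optimized choice $c=2p$ is valid; note $c=1$ already gives the bracket equal to $1$, which suffices). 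This is the Kraus/purification-optimization technique for channel QFI, and it buys something real: your bound $J_B\leq 4\left(1+|t_2|\right)^2$ holds for \emph{every} probe and \emph{every} $\alpha$, which makes the necessity direction airtight, whereas the paper's step ``to attain $\overline{J}_{BF}$, $J_B$ should not depend on $\alpha$, hence some $C_{ij}=0$'' is heuristic and, taken literally, leaves open whether a probe with $\alpha$-dependent $J_B$ could exceed $4\left(1+|t_2|\right)^2$ at isolated values of $\alpha$; your argument closes precisely that loophole. Conversely, the paper's explicit formula \eqref{eq:JBsimple} yields more side information: it identifies the full set of optimal probes, directly supports the remark that entanglement across $AE$ is useless, and feeds the multiple-instance analysis of Sec.~\ref{Sec:Multiple}, none of which your operator-norm estimate provides by itself.
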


\begin{proof}
Let us introduce the eigenvectors of $\boldsymbol{\hat m  \cdot \sigma}$ as
\begin{align}
\boldsymbol{\hat m  \cdot \sigma}|0_{\,\boldsymbol{\hat m}}\rangle
&=|0_{\,\boldsymbol{\hat m}}\rangle.\\
\boldsymbol{\hat m  \cdot \sigma}
|1_{\,\boldsymbol{\hat m}}\rangle
&=-|1_{\,\boldsymbol{\hat m}}\rangle,
\end{align}
with
\begin{align}
|0_{\,\boldsymbol{\hat m}}\rangle&=\frac{1}{\sqrt{2}}
\left(\frac{m_1-im_2}{\sqrt{1-m_3}}\,
|0\rangle+\sqrt{1-m_3}\,|1\rangle
\right),\\
|1_{\,\boldsymbol{\hat m}}\rangle&=\frac{1}{\sqrt{2}}
\left(-\frac{m_1-im_2}{\sqrt{1+m_3}}\,
|0\rangle+\sqrt{1+m_3}\,|1\rangle
\right).
\end{align}
Here {$\{|0\rangle,|1\rangle\}$ is the basis of $\mathbb{C}^2$ 
consisting of the eigenvectors of $\sigma_3$.}
We can do similarly for $\boldsymbol{\hat n  \cdot \sigma}$.

The eigenvalues of $G_1\otimes G_2$ result  
$\{\left(t_1+1\right)\left(t_2+1\right), \left(t_1+1\right)\left(t_2-1\right), \left(t_1-1\right)\left(t_2+1\right), \left(t_1-1\right)\left(t_2-1\right)\}$, hence the maximum Fisher information we can get when accessing {the whole final system is, according to \eqref{qm}:}
\begin{equation}\label{eq:maxJBF}
\overline{J}_{BF}=\left\{ \begin{array}{ccc}
4\left(1+|t_1|\right)^2 & & |t_2|\leq|t_1|, \; |t_2|<1 \\
4\left(1+|t_2|\right)^2 & & |t_1|\leq|t_2|, \; |t_1|<1 \\
4\left(t_1+t_2\right)^2 & & |t_1|, |t_2|\geq 1, \; t_1t_2>0\\
4\left(t_1-t_2\right)^2 & & |t_1|, |t_2|\geq 1, \; t_1t_2<0\\
\end{array}\right..
\end{equation}
In the basis $\{ \ket {0_{\boldsymbol{\hat m}}} \ket {0_{\boldsymbol{\hat n}}}, 
\ket {0_{\boldsymbol{\hat m}}}\ket {1_{\boldsymbol{\hat n}}},
\ket {1_{\boldsymbol{\hat m}}}\ket {0_{\boldsymbol{\hat n}}},
\ket {1_{\boldsymbol{\hat m}}}\ket {1_{\boldsymbol{\hat n}}}
\}$ we can write the generic input state as
\begin{equation}\label{eq:input}
C_{00}|0_{\,\boldsymbol{\hat m}}\rangle
|0_{\,\boldsymbol{\hat n}}\rangle
+C_{01}|0_{\,\boldsymbol{\hat m}}\rangle
|1_{\,\boldsymbol{\hat n}}\rangle
+C_{10}|1_{\,\boldsymbol{\hat m}}\rangle
|0_{\,\boldsymbol{\hat n}}\rangle
+C_{11}|1_{\,\boldsymbol{\hat m}}\rangle
|1_{\,\boldsymbol{\hat n}}\rangle,
\end{equation}
with $C_{ij}\in \mathbb{C}$, such that $\sum_{i,j=0}^1 |C_{ij}|^2=1$. 
In turn, the unitary reads:
 \begin{equation}\label{eq:Umn}
U(\alpha)=e^{-i\alpha G_1\otimes G_2}=diag\{ e^{-i \alpha  (t_1+1) (t_2+1)}, e^{-i \alpha  (t_1+1) (t_2-1)}, e^{-i \alpha  (t_1-1) (t_2+1)}, e^{-i \alpha  (t_1-1) (t_2-1)}\}.
\end{equation}
Applying \eqref{eq:Umn} to \eqref{eq:input} and 
tracing away $F$ yields
\begin{equation}
\rho_B=\left(\begin{array}{cc}
|C_{00}|^2+|C_{01}|^2 &  e^{-2 i(1+t_2) \alpha }( C_{00}C_{10}^{*} +C_{01}C_{11}^{*}e^{4 i \alpha })\\ \\
e^{2 i(1+t_2) \alpha }( C_{00}^{*} C_{10}+C_{01}^{*}C_{11}e^{-4 i \alpha }) & |C_{10}|^2+|C_{11}|^2 
\end{array}\right).
\end{equation}
The Fisher information of $\rho_B$ can be evaluated,
using the methods of Sec.\ref{Sec:Fisher}, as
\begin{eqnarray}\label{eq:JBsimple}
J_B&=&16\frac{ |\left(|C_{01}|^2(t_2-1)-|C_{00}|^2(t_2+1)\right) C_{10} C_{11}^{*}e^{4 i a}-\left(|C_{11}|^2(t_2-1)-|C_{10}|^2(t_2+1)\right)C_{00} C_{01}^{*} |^2}{ |C_{00} C_{11}-C_{01} C_{10} e^{4 i a }|^2 }\\ \notag
&-&16\frac{\left({C_{00}^{*}}^{2}C_{01}^{2}C_{10}^{2} {C_{11}^{*}}^{2}e^{-8 i a}+C_{00}^{2}{C_{01}^{*}}^{2}{C_{10}^{*}}^{2}C_{11}^{2}e^{8i a}\right)t_2-2|C_{00}|^2|C_{01}|^2|C_{10}|^2|C_{11}|^2 t_2^2}{|C_{00} C_{11}-C_{01} C_{10} e^{4 i a }|^2 }.
\end{eqnarray}
In order to eventually attain the value of Eq.\eqref{eq:maxJBF}, $J_B$ should not depend on $\alpha$.
This implies to have $C_{00}=0 \vee C_{01}=0 \vee C_{10}=0 \vee C_{11}=0$. 
{As a consequence,} the maximum of $J_B$ will be 
\begin{equation}\label{eq:maxcond}
\overline{J}_B=\left\{ \begin{array}{ccc}
4\left(t_2+1 \right)^2 & \text{for}\quad  t_2\geq0 &\text{with}\quad |C_{00}|=|C_{10}|=\frac{1}{\sqrt{2}}  \\
4\left(t_2-1 \right)^2 & \text{for}\quad  t_2\leq0 &\text{with}\quad |C_{01}|=|C_{11}|=\frac{1}{\sqrt{2}}
\end{array}\right..
\end{equation}
The gap between $\overline{J}_B$ and $\overline{J}_{BF}$ then reads
\begin{equation}\label{eq:gap1}
\Delta:=\overline{J}_{BF}-\overline{J}_B=\left\{ \begin{array}{ccc}
0 & & |t_1| \leq |t_2|, \; |t_1| \leq 1 \\
4\left( |t_1|-|t_2|\right)\left( 2+|t_1|+|t_2|\right) & & |t_2| \leq |t_1|, \; |t_2| \leq 1 \\
4\left( t_1-1\right)\left( t_1+2|t_2|+1\right) & &  t_1>1, \; |t_2|>1 \\
4\left( t_1+1\right)\left( t_1+2|t_2|-1\right) & & t_1<-1, \; |t_2|>1
\end{array}\right..
\end{equation}
\hfill
\end{proof}
\begin{remark}
According to the conditions \eqref{eq:maxcond}, the maximum of $J_B$ is achieved by separable states.
In other words entangled input is not useful for this task.
\end{remark}

\begin{corollary}
Given a family of two-qubit unitaries $U^{AE\to BF}_\alpha=e^{-i\alpha G_1\otimes G_2}$,
with generator $G_i=\sum_{j=0}^3c^{(i)}_{j}\,\sigma_j$ 
($\sigma_0=I$ and $c^{(i)}_{j}\in\mathbb{R}$),
{to have} $\overline{J}_B=\overline{J}_{BF}$ it is sufficient that 
$c^{(1)}_{0}=0$, i.e.
${\rm Tr} G_1=0$ or equivalently $G_1\in\mathfrak{su}(2)$.
\end{corollary}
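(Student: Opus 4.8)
The plan is to reduce the statement to the necessary-and-sufficient criterion already established in Theorem \ref{th:D0}, namely $\overline{J}_B=\overline{J}_{BF}$ iff $|t_1|\leq\min(1,|t_2|)$, once the generators are cast in the normalized parametrization \eqref{eq:G1}--\eqref{eq:G2}. The first step is therefore to connect the hypothesis $c^{(1)}_0=0$ to a vanishing of the effective parameter $t_1$. Setting $\mu:=\sqrt{(c^{(1)}_1)^2+(c^{(1)}_2)^2+(c^{(1)}_3)^2}$ and $\nu:=\sqrt{(c^{(2)}_1)^2+(c^{(2)}_2)^2+(c^{(2)}_3)^2}$, I would factor these norms out to get
\begin{equation*}
G_1\otimes G_2=\mu\nu\left(\boldsymbol{\hat m \cdot \sigma}+\tfrac{c^{(1)}_0}{\mu}\,\sigma_0\right)\otimes\left(\boldsymbol{\hat n \cdot \sigma}+\tfrac{c^{(2)}_0}{\nu}\,\sigma_0\right),
\end{equation*}
with $\boldsymbol{\hat m}$, $\boldsymbol{\hat n}$ unit vectors. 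As observed in the footnote to \eqref{eq:G2}, the scalar $\mu\nu$ is reabsorbed into $\alpha$; this merely rescales both $\overline{J}_B$ and $\overline{J}_{BF}$ by the common factor $(\mu\nu)^2$ (via the chain rule for the quantum Fisher information under reparametrization) and so leaves their equality untouched. Consequently the parameters feeding Theorem \ref{th:D0} are the normalized ones, $t_1=c^{(1)}_0/\mu$ and $t_2=c^{(2)}_0/\nu$.

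The second step is then immediate: the hypothesis $c^{(1)}_0=0$ yields $t_1=0$, so $|t_1|=0\leq\min(1,|t_2|)$ holds unconditionally, the right-hand side being nonnegative for every value of $t_2$. Invoking Theorem \ref{th:D0} gives $\overline{J}_B=\overline{J}_{BF}$, which is the claim. For the equivalent formulations one uses $\tr\sigma_j=0$ for $j=1,2,3$ and $\tr\sigma_0=2$, whence $\tr G_1=2c^{(1)}_0$, so that $c^{(1)}_0=0\Leftrightarrow\tr G_1=0$; a traceless Hermitian $2\times2$ operator is, up to the conventional factor $i$, a generator in $\mathfrak{su}(2)$. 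It is worth noting that $\tr G_1=0$ is precisely the statement that the partial trace of the full generator over the accessed subsystem, $\tr_A(G_1\otimes G_2)=(\tr G_1)\,G_2$, vanishes, which is the general phenomenon announced in the Introduction.

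The only point that needs a little care—and the closest thing to an obstacle—is the degenerate case $\mu=0$, where the factorization above is ill-defined. However, $\mu=0$ combined with $c^{(1)}_0=0$ forces $G_1=0$, hence $U_\alpha=I$ and $\overline{J}_B=\overline{J}_{BF}=0$ trivially, while for $\mu\neq0$ the rescaling argument applies verbatim. Thus the sufficiency of $c^{(1)}_0=0$ holds in every case, and no computation beyond Theorem \ref{th:D0} is actually required; the corollary simply isolates the most transparent way to satisfy $|t_1|\leq\min(1,|t_2|)$, namely by pushing $t_1$ all the way to zero.
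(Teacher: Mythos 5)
Your proof is correct and follows essentially the same route as the paper, which simply observes that $G_1\in\mathfrak{su}(2)$ iff $t_1=0$ and then invokes Theorem \ref{th:D0}. The extra details you supply---the explicit rescaling of $\alpha$ by $\mu\nu$ with the corresponding $(\mu\nu)^2$ scaling of both Fisher informations, and the degenerate case $\mu=0$---are points the paper leaves implicit (in the footnote on normalization), so they strengthen rather than change the argument.
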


\begin{proof}
It immediately follows {from} Theorem \ref{th:D0} by observing that $G_1\in \mathfrak{su}(2)$ iff $t_1=0$.
\end{proof}


\subsection{Multiple instances estimation}\label{Sec:Multiple}

Here we shall consider estimation by multiple copies of the unitary $U_\alpha$. This will allow us to investigate the usefulness of entanglement across inputs on different copies of $U_\alpha$.
The simplest and non-trivial case is represented by two copies of the unitary $U_\alpha$ 
as depicted in Fig.\ref{model2}.

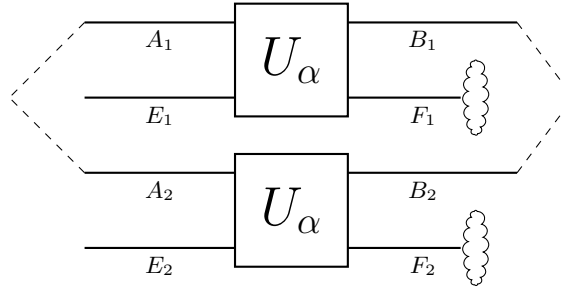
\begin{figure}[ht]
\begin{center}
\begin{tikzpicture}[scale=0.5]
\draw[thick] (0,0) -- (4,0); 
\draw[thick] (0,2) -- (4,2); 
\draw[thick] (7,2) -- (11.5,2);
\draw[thick] (7,0) -- (10,0);
\draw[thick] (4,-0.5) rectangle (7,2.5);
\node[cloud, cloud puffs=15.7, cloud ignores aspect, minimum width=0.3cm, minimum height=1cm, align=center, draw] (cloud) at (10.4,0){}; 
\node[left] at (0,0){$$};
\node[below] at (2,0){$E_1$};
\node[below] at (9,2){$B_1$};
\node[below] at (2,2){$A_1$}; 
\node[below] at (9,0){$F_1$};
\draw (5.5,1) node[font = \fontsize{20}{20}\sffamily\bfseries]{$U_\alpha$};
\draw[thick] (0,-2) -- (4,-2); 
\draw[thick] (0,-4) -- (4,-4); 
\draw[thick] (7,-2) -- (11.5,-2);
\draw[thick] (7,-4) -- (10,-4);
\draw[thick] (4,-4.5) rectangle (7,-1.5);
\node[cloud, cloud puffs=15.7, cloud ignores aspect, minimum width=0.3cm, minimum height=1cm, align=center, draw] (cloud) at (10.4,-4){}; 
\node[left] at (0,-2){$$};
\node[below] at (2,-4){$E_2$};
\node[below] at (9,-2){$B_2$};
\node[left] at (0,1){$$}; 
\node[right] at (10,1){$$}; 
\node[below] at (2,-2){$A_2$}; 
\node[below] at (9,-4){$F_2$};
\draw (5.5,-3) node[font = \fontsize{20}{20}\sffamily\bfseries]{$U_\alpha$};
\draw[dashed] (0,2) -- (-2,0);
\draw[dashed](0,-2)--(-2,0);
\draw[dashed] (11.5,2) -- (13,0);
\draw[dashed](11.5,-2)--(13,0);
\end{tikzpicture}
\end{center}
 \caption{Estimation of $\alpha$ by two copies of the unitary $U_\alpha$. The input system $A_1$ can be entangled with $A_2$.} \label{model2}
\end{figure}

Assuming $U_\alpha=e^{-i \alpha G_1\otimes G_2}$ with $G_1,G_2$ given by Eqs.\eqref{eq:G1}
and \eqref{eq:G2}, we know from conditions \eqref{eq:maxcond} that the optimal input states on single instance are 
\begin{equation}
\ket{\Psi(\phi)}_{i} = \frac{\ket{0_{\boldsymbol{\hat m}}}+e^{i \phi}\ket{1_{\boldsymbol{\hat m}}}}{\sqrt 2}
\ket{0_{\boldsymbol{\hat n}}}, \qquad \phi\in[0,2\pi],
\label{eq:optt2}
\end{equation}
for $t_2\geq 0$.
Then we expect the optimal input in two instances to be an entangled state built up with the twofold tensor product of states \eqref{eq:optt2}.
Let us consider 
\begin{align}\label{input2}
\ket \Upsilon_{i}=\frac{1}{\sqrt{2}} \left(\ket{\Psi(0)}_{i}^{\otimes 2} +
\ket{\Psi(\pi)}_{i}^{\otimes 2} \right),
\end{align}
that {is maximally entangled between $A_1$ and $A_2$.}

The {(global) final state} after unitary transformation reads
\begin{eqnarray}
\ket \Upsilon_{f}= \left( U_{\alpha}\otimes U_{\alpha}\right) \ket\Upsilon_{i},
\end{eqnarray}
where $U_\alpha$ is given by \eqref{eq:Umn}.
The maximum Fisher information we can get when accessing the {whole final system} is
4 times of the one in Eq.\eqref{eq:maxJBF}.
The output state we are going to measure is
\begin{eqnarray}
\rho_{\boldsymbol{B}}={\rm Tr}_{\boldsymbol{F}} |\Upsilon\rangle_{f}\langle \Upsilon |,
\end{eqnarray}
with $\boldsymbol{B}:=\left( B_1, B_2\right), \boldsymbol{F}:=\left( F_1, F_2\right)$.
The Fisher information with this state, 
computed according to the methods of Sec.\ref{Sec:Fisher}, results 
\begin{equation}
{J}_{\boldsymbol{B}}=16 (1+t_2)^2.
\end{equation}
Repeating the {above steps} for $t_2<0$,
which amounts to flip the environment state
$\ket{0_{\boldsymbol{\hat n}}}$ to $\ket{1_{\boldsymbol{\hat n}}}$,
{in} \eqref{eq:optt2} and hence {in \eqref{input2},}
we can conclude that the gap between $\overline{J}_{\boldsymbol{BF}}$ and 
$\overline{J}_{\boldsymbol{B}}$ reads as 4 times (number of instances squared) of the one in \eqref{eq:gap1}.
This amounts to have $\Delta=0$ for two instances estimation as well, under conditions \eqref{eq:gap1}.


\section{Two-qubit unitaries with generic generators}\label{Sec:qubit2}

{
The aim of this Section is to provide a procedure to find $\overline{J}_B$ for any given generator.
We shall then apply the procedure to a case study that, though is not the most general, it is enough representative to draw some general conclusions.
}

According to Ref.\cite{QM} if we consider a single qubit unitary transformation $U=e^{-i\alpha G}$, the optimal input state will be $|\psi\rangle=(|\psi_1\rangle+|\psi_2\rangle)/\sqrt{2}$, where $|\psi_i\rangle$ is the eigenvector corresponding to the eigenvalue $\lambda_i$ {of $G$. Then the final state} reads
\begin{align}
U|\psi\rangle&=\frac{1}{\sqrt{2}}\left( e^{-i\alpha\lambda_1}|\psi_1\rangle+e^{-i\alpha\lambda_2}
|\psi_2\rangle \right) \\
&=\frac{1}{2}\left(1+ e^{i\alpha(\lambda_1-\lambda_2)}\right) |\psi_+\rangle
+\frac{1}{2}\left(1-e^{i\alpha(\lambda_1-\lambda_2)}\right)
|\psi_-\rangle,
\end{align}
where $|\psi_\pm\rangle:=(|\psi_1\rangle\pm|\psi_2\rangle)/\sqrt{2}$.
The corresponding density operator has the following matrix representation 
(in the basis $|\psi_\pm\rangle$)
\begin{equation}
\left(\begin{array}{ccc}
\cos^2\left(\alpha\frac{(\lambda_1-\lambda_2)}{2}\right) & & i\sin\left(\alpha\frac{(\lambda_1-\lambda_2)}{2}\right) \cos\left(\alpha\frac{(\lambda_1-\lambda_2)}{2}\right)  \\ \\
 -i\sin\left(\alpha\frac{(\lambda_1-\lambda_2)}{2}\right) \cos\left(\alpha\frac{(\lambda_1-\lambda_2)}{2}\right) & & \sin^2\left(\alpha\frac{(\lambda_1-\lambda_2)}{2}\right)
\end{array}\right).
\end{equation}
It is easy to check, with methods of Sec.\ref{Sec:Fisher}, that the Fisher information achievable with this matrix,
can also be achieved with a matrix having the same diagonal terms and {off-diagonal terms} with different phases or even nullifying.
As a consequence, when estimating $\alpha$ in our bottleneck scheme, we would like to have 
a similar form for the reduced density matrix $\rho_B$.
This could come from a state 
\begin{equation}\label{formBF}
|\Psi_{BF}\rangle=\cos(.) |\Psi\rangle+e^{i\phi} \sin(.) |\Psi^\perp\rangle,
\end{equation}
where the argument of trigonometric functions must be proportional to $\alpha$ and $|\Psi\rangle$,
$|\Psi^\perp\rangle$ are orthogonal vectors in the space $\mathbb{C}^2\otimes \mathbb{C}^2$
(if $|\Psi\rangle$ is factorable the orthogonality condition must hold true at least in the subsystem $A$).

Suppose now to have found $|\Psi\rangle \in \mathbb{C}^2\otimes \mathbb{C}^2$ such that 
\begin{equation}\label{ocond1}
G|\Psi\rangle=a|\Psi^\perp\rangle, \quad a\in\mathbb{C},
\end{equation}
and hence 
\begin{equation}\label{ocond2}
G|\Psi^\perp\rangle=a^{*}|\Psi\rangle.
\end{equation}
Then, using the Taylor expansion, {it results} 
\begin{equation}
e^{-i\alpha G}|\Psi\rangle=\cos( |a|\alpha)\,|\Psi\rangle
-i e^{i \arg a}\sin( |a|\alpha) \, |\Psi^\perp\rangle,
\end{equation}
which is compatible with the form \eqref{formBF}.

So the problem can be reduced to find $|\Psi\rangle \in \mathbb{C}^2\otimes \mathbb{C}^2$
satisfying \eqref{ocond1}. To this end, we can look for eigenstates of operators anti-commuting with $G$.  In fact, if $GA+AG=0$ and $A|\Psi\rangle=\lambda |\Psi\rangle$ (with $\Re\{\lambda\}\neq 0$) it will be
\begin{equation}
\langle \Psi| \left(GA+AG\right) |\Psi\rangle=0
\Rightarrow
2\Re\{\lambda\} \langle \Psi | G |\Psi \rangle=0
\Rightarrow
{G|\Psi\rangle \perp |\Psi\rangle.}
\end{equation}
{Hence $ G|\Psi\rangle$ can be used in place of $|\Psi^\perp\rangle$,}
Summarizing, in order to find $\overline{J}_B$, we have to look for optimal input states among the eigenstates of operators anti-commuting with the generator.
Let us closely analyze a couple of cases.

\bigskip

Moving on from Sec.\ref{Sec:qubit1} the first case of generator where to apply this procedure seems 
\begin{equation}\label{firstGcase}
G=\sigma_1\otimes\sigma_1+ t_1 I\otimes \sigma_1+ t_2 \sigma_1\otimes I+ t_3 I\otimes I,
\qquad t_1,t_2,t_3\in\mathbb{R},
\end{equation}
with $t_3$ being a generic coefficient not {necessarily} equal to the product $t_1t_2$. 
However one can easily realize that $t_3$ does not appear in both $\overline{J}_{BF}$ and 
$\overline{J}_{B}$.
Hence the results will be the {same as} those found in Sec.\ref{Sec:qubit1}.

\bigskip

Next we are led to consider a generator of the kind
\begin{equation}\label{eq:Gcase}
G=\sigma_1 \otimes \sigma_1+t_1 I \otimes \sigma_3+t_2 \sigma_3 \otimes I, 
\end{equation}
which cannot be traced back to the tensor product of two generators.

Using the eigenvalues of \eqref{eq:Gcase} the maximum Fisher information one can get when measuring the system $BF$ results
\begin{equation}\label{eq:JBFt1t2}
\overline{J}_{BF}=4 \left(1+\left( |t_1|+|t_2|\right)^2 \right).
\end{equation}
For what concern the calculation of $\overline{J}_B$,
let us write the anticommutator $A$ as a generic Hermitian matrix 
\begin{equation}\label{matrixA}
A=\left(
\begin{array}{cccc}
a_{11} & a_{12}+i a_{21} & a_{13}+i a_{31} & a_{14}+i a_{41} \\
a_{12}-i a_{21} & a_{22} & a_{23}+i a_{32} & a_{24}+i a_{42} \\ 
a_{13}-i a_{31} & a_{23}-i a_{32} & a_{33} & a_{34}+i a_{43} \\ 
a_{14}-i a_{41} & a_{24}-i a_{42} & a_{34}-i a_{43} & a_{44}
\end{array}
\right).
\end{equation}
The solutions for $A$ anticommuting with \eqref{eq:Gcase} 
must be distinguished depending on the values of $t_1$ and $t_2$.

\begin{itemize}

\item[{\bf i)}]
$t_1\neq 0$ and $t_2=0$.

\begin{align}
&a_{14}=-a_{11}t_1,\\
&a_{23}=a_{22}t_1, \quad a_{24}=-a_{13}, \quad a_{42}=a_{31},\\
&a_{33}=-a_{22}, \quad a_{34}=-a_{12}-2 a_{13}t_1, \quad a_{43}=a_{21}+2 a_{31}t_1,\\
&a_{aa}=-a_{11}.
\end{align}
Then, upon normalization, the eigenvectors of $A$ can be cast into the following form
\begin{equation}\label{in-gen1}
\ket{\Psi_\pm(\theta,\phi)} = \left( \cos \theta \ket 0 \pm i \sin \theta \ket 1\right)_A \left( \frac{\ket 0+e^{i \phi} \ket1 }{\sqrt 2}\right)_E,
   \quad \theta,\phi\in[0,2\pi],
  \end{equation} 
which provides the expression for eigenvectors in Eq.\eqref{ocond1} with $a=\sqrt{1+t_1^2}$. 

This in turn gives $\overline{J}_B$ not depending on $t_1$ and equal to 4, 
thus by referring to \eqref{eq:JBFt1t2} we have
\begin{equation}
\Delta=\overline{J}_{BF}-\overline{J}_{B}=4(1+t_1^2)-4=4t_1^2,
\end{equation}  
that nullifies only for $t_1=0$.\footnote{It is worth mentioning that the set of optimal input states  when $t_1=t_2=0$ extends to
$\left( \cos \theta \ket 0 \pm i \sin \theta \ket 1\right)_A  \ket \varphi _E$,
$\forall \theta\in[0,2\pi]$ and $\forall \ket \varphi \in \mathbb{C}^2$.}


\item[{\bf ii)}]
$t_1=0$ and $t_2\neq 0$.

\begin{align}
&a_{14}=-a_{11}t_2,\\
&a_{23}=-a_{22}t_2, \quad a_{24}=-a_{13}-2a_{12}t_2, \quad a_{42}=a_{31}+2a_{21}t_2,\\
&a_{33}=-a_{22}, \quad a_{34}=-a_{12}, \quad a_{43}=a_{21},\\
&a_{aa}=-a_{11}.
\end{align}
Then, upon normalization, the eigenvectors of $A$ can be cast into the following form
\begin{equation}\label{eq:optimalii}
\ket{\Psi_{\pm\pm}(\theta)} = \left(  \frac{\ket 0 \pm i  \ket1 }{\sqrt 2}\right)_A\left( \cos \theta \ket 0 \pm \sin \theta \ket 1\right)_E  ,\quad \theta \in[0, 2\pi],
\end{equation}
which provides the expression for eigenvectors in Eq.\eqref{ocond1} with $a=\sqrt{1+t_2^2}$. 

This in turn gives $\overline{J}_B=4(1+t_2^2)$, which 
results equal to $\overline{J}_{BF}$ of
 \eqref{eq:JBFt1t2} implying $\Delta=0$.


\item[{\bf iii)}]
$t_1t_2\neq 0$.

\begin{align}
&a_{12}= a_{21}=a_{13}=a_{31}=0,  \quad
a_{14}=-a_{11}(t_1+t_2),\\
&a_{24}=a_{22}(t_1-t_2), \quad 
a_{42}=a_{32}, \\
&a_{33}=-a_{22}, \quad
a_{34}=a_{43}=0, \\
&a_{44}=-a_{11}.
\end{align}
Then, upon normalization, the eigenvectors of $A$ can be cast into the following form
\begin{align}
&\ket{\Psi_{\pm}(\theta)} = \cos\theta|01\rangle\pm i \sin\theta |10\rangle, 
\qquad  \theta\in[0,2\pi],\label{eq:evec1}\\
&\ket{\Psi_{\pm}(\theta)} = \cos\theta|00\rangle\pm i \sin\theta |11\rangle, 
\qquad  \theta\in[0,2\pi], \label{eq:evec2}  
\end{align}
which provide the expression for eigenvectors in Eq.\eqref{ocond1} 
with respectively $a=\sqrt{1+(t_1-t_2)^2}$ for \eqref{eq:evec1} and 
$a=\sqrt{1+(t_1+t_2)^2}$ for \eqref{eq:evec2}.

Taking either \eqref{eq:evec1}, or \eqref{eq:evec2} as an input we arrive, with
the technique of Sec.\ref{Sec:Fisher}, to the following Fisher information for the $B$ system
\begin{eqnarray}\label{JBt1t2}
{ J}_B=\frac{4 a^2\left[a \sin (2 \theta) \cos \left(2 \alpha  a\right)\pm\cos (2 \theta) \sin \left(2 \alpha  a\right)\right]^2}
{a^4-\left[a \sin (2 \theta) \sin \left(2 \alpha  a\right)
+\cos (2 \theta) \left(\cos \left(2 \alpha  a\right)+a^2-1\right)\right]^2}.
\end{eqnarray}
One can easily show that for each value of $\alpha$, there exists   
at least a value $\theta(\alpha)$ {giving ${ J}_B=4$.} 
Hence $\overline{J}_B=4$. This presumes, however, to adjust the input state according to the value of the parameter $\alpha$, which is in principle unknown. Thus we prefer to consider a unique input state for all $\alpha$.
In such a circumstance, from \eqref{JBt1t2} we argue that $\overline {J}_B=4$ 
only when $a=1$ and $\theta=0,\frac{\pi}{4},\frac{\pi}{2},\frac{3\pi}{4}\ldots$,
implying $ |t_1|=|t_2|$. 

In any case, from Eq.\eqref{eq:JBFt1t2} we have $\overline {J}_{BF}=4 $ only for $t_1=t_2=0$, hence we can conclude that $\Delta=0$ if $t_1=t_2=0$.
When $t_1,t_2\neq 0$ the quantity $\Delta$ depends on $\alpha$ through
 \eqref{JBt1t2}. Figure \ref{fig_contour} shows $\Delta$ 
vs $t_+:=t_1+t_2$ (assuming $t_1t_2 > 0$) and $\alpha$ for $\theta=\pi/4$. 
We can see that when $t_+$ approaches zero (i.e. $t_1,t_2\to 0$ given the assumption $t_1t_2 > 0$) the quantity $\Delta$ tends to zero.\footnote{Notice that in 
\eqref{JBt1t2} we cannot take the limit $t_1\to 0$ (or $t_2\to 0$), because otherwise we should consider the case {\bf ii)} (or {\bf i)} respectively). }
As soon as $t_+$ becomes different from zero, peaks appear whose width and number increases 
with $|t_+|$. 
\begin{figure}[H]
	\centering
	\includegraphics[width=0.5\textwidth]{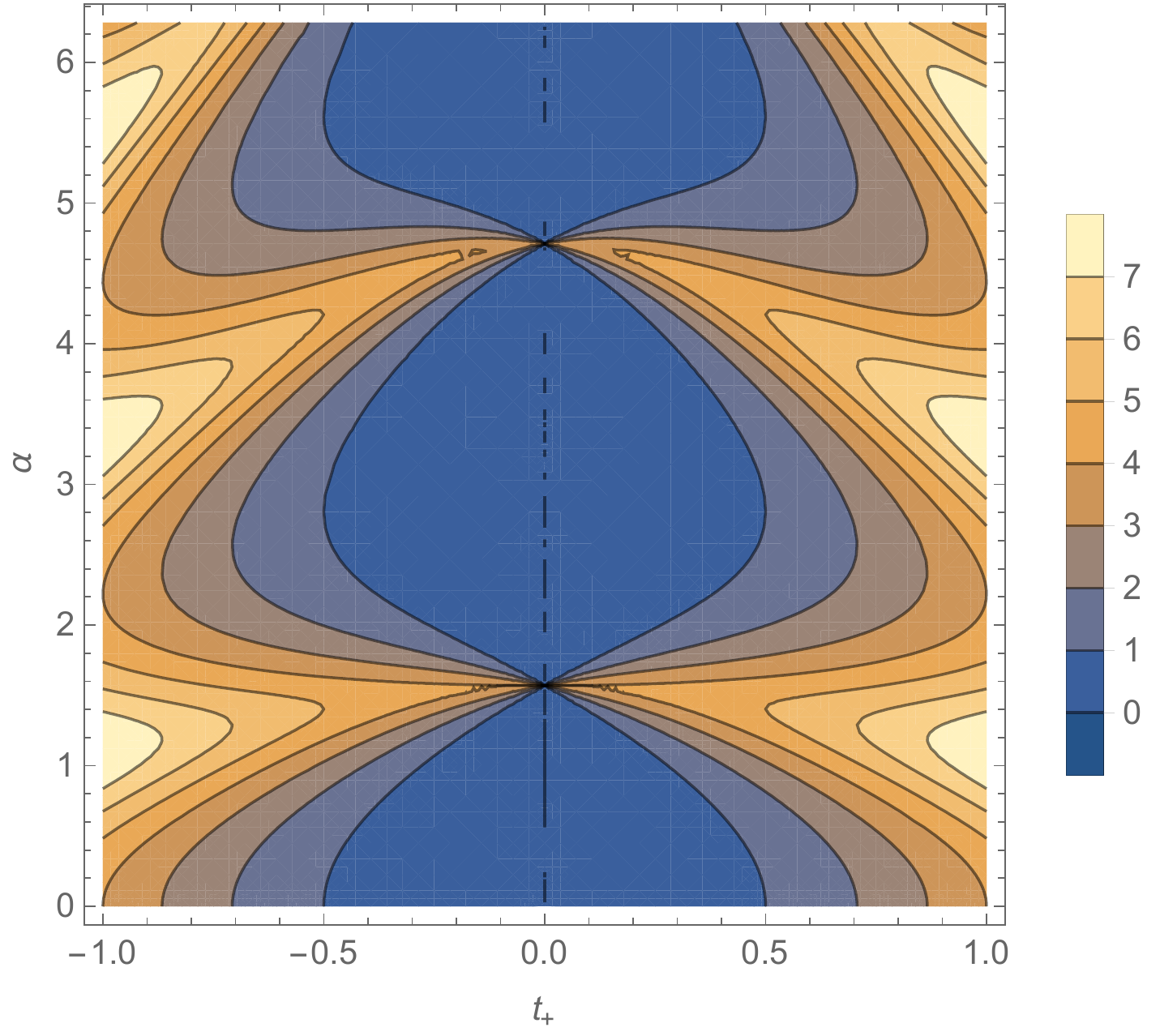}
	\caption{Contour plot of $\Delta$ vs $t_+:=t_1+t_2$ (assuming $t_1t_2 > 0$) and $\alpha$ for $\theta=\pi/4$.}
	\label{fig_contour}
\end{figure}


\end{itemize}


All together the results of {\bf i)},  {\bf ii)} and {\bf iii)} show that 
zero gap can only be attained when $t_1=0$, a tighter condition with  
respect to the case where $G=G_1\otimes G_2$ (see Theorem \ref{th:D0}).
Furthermore, according to \eqref{in-gen1}, \eqref{eq:optimalii}, \eqref{eq:evec1} 
and \eqref{eq:evec2}, factorable $AE$ states are enough to maximize $J_B$.
Although it is not guaranteed that the input states found following the method described at beginning of this Section are the only optimal ones, numerical search has shown that this is the case (see Appendix A).


\bigskip


{
Let us conclude with some considerations on more general forms of generator.
Actually, the most general form is $G=\sum_{i,j=0}^3c_{ij}\sigma_i\otimes\sigma_j$ 
($\sigma_0=I$ and $c_{ij}\in\mathbb{R}$),
which can also be recast into the form
\begin{equation}
c_{00} I\otimes I + c_{ \boldsymbol{\hat{m}}} \, 
\boldsymbol{\hat{m}}\cdot \boldsymbol{\sigma}\otimes I
+c_{ \boldsymbol{\hat{n}}} \, I \otimes \boldsymbol{\hat{n}}\cdot \boldsymbol{\sigma}
+ c_{ \boldsymbol{\hat{p}}} \, \boldsymbol{\hat{p}}\cdot \boldsymbol{\sigma}\otimes \sigma_1
+ c_{ \boldsymbol{\hat{q}}} \, \boldsymbol{\hat{q}}\cdot \boldsymbol{\sigma}\otimes \sigma_2
+ c_{ \boldsymbol{\hat{r}}} \, \boldsymbol{\hat{r}}\cdot \boldsymbol{\sigma}\otimes \sigma_3,
\end{equation}
with $\boldsymbol{\hat{m}}$, $\boldsymbol{\hat{n}}$, $\boldsymbol{\hat{p}}$, $\boldsymbol{\hat{q}}$, $\boldsymbol{\hat{r}}$ are generic directions in $\mathbb{R}^3$. 
Now, for what concern the attainability of $\overline{J}_{BF}$ by $\overline{J}_{B}$, the choice $c_{00}=0$ will not affect the results, because the identity is present in the second and the third term as well.
Furthermore, since ${ \boldsymbol{\hat{m}}} $ is a generic direction, we can take 
$ \boldsymbol{\hat{m}}\cdot \boldsymbol{\sigma}=\sigma_3$
and similarly $\boldsymbol{\hat{n}}\cdot \boldsymbol{\sigma}=\sigma_3$.
Finally, out of the three terms
$\boldsymbol{\hat{p}}\cdot \boldsymbol{\sigma}\otimes \sigma_1$,
$\boldsymbol{\hat{q}}\cdot \boldsymbol{\sigma}\otimes \sigma_2$,
$\boldsymbol{\hat{r}}\cdot \boldsymbol{\sigma}\otimes \sigma_3$,
it is enough to take only one,
because we have evidence that contributions along three different (although orthogonal) directions behave in the same way (see Appendix B).
Thus we can choose $\sigma_1\otimes\sigma_1$ and arrive to the example studied in Eq.\eqref{eq:Gcase} (choosing instead $\sigma_3\otimes\sigma_3$ leads to the example studied in Eq.  \eqref{firstGcase}).

This shows that the example studied in Eq.\eqref{eq:Gcase}, is representative of the most general generator, concerning the attainability of $\overline{J}_{BF}$ by $\overline{J}_{B}$. Hence, we can draw the following conjecture.
}

\begin{conjecture}\label{conj}
Given a family of two-qubit unitaries $U^{AE\to BF}_\alpha=e^{-i\alpha G}$,
with generator $G=\sum_{i,j=0}^3c_{ij}\sigma_i\otimes\sigma_j$ 
($\sigma_0=I$ and $c_{ij}\in\mathbb{R}$),
in order to have $\overline{J}_B=\overline{J}_{BF}$ it is sufficient that 
$c_{0j}=0$, $\forall j$, i.e.
${\rm Tr}_B G=0$.
\end{conjecture}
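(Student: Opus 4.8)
The plan is to prove the two inequalities separately. The bound $\overline{J}_B\le\overline{J}_{BF}$ holds unconditionally: the map $\rho_{BF}\mapsto{\rm Tr}_F\rho_{BF}$ is a fixed (parameter-independent) CPTP map and the quantum Fisher information is monotone under such maps, so for every input the information accessible on $B$ cannot exceed the one available on $BF$; taking the supremum over probes gives $\overline{J}_B\le\overline{J}_{BF}$. Since $BF$ carries the pure state $U_\alpha|\psi_0\rangle$, estimating $\alpha$ on $BF$ is ordinary unitary estimation, and Eq.\eqref{qm} gives $\overline{J}_{BF}=(\lambda_{max}-\lambda_{min})^2$, with $\lambda_{max},\lambda_{min}$ the extreme eigenvalues of the full generator $G$ and optimal probe $|\psi_0\rangle=(|\lambda_{max}\rangle+|\lambda_{min}\rangle)/\sqrt{2}$. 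The whole content of the statement is therefore the reverse inequality: under $c_{0j}=0$ one must exhibit a single probe for which the reduced state $\rho_B(\alpha)$ already attains $J_B=(\lambda_{max}-\lambda_{min})^2$.

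First I would normalise the generator by local unitaries. Conjugation $G\mapsto(U_A\otimes U_E)G(U_A\otimes U_E)^\dagger$ leaves $\overline{J}_{BF}$ invariant (it preserves the spectrum of $G$) and leaves $\overline{J}_B$ invariant as well: the evolved global state is merely rotated by $U_A\otimes U_E$, so $\rho_B$ is conjugated by the $\alpha$-independent unitary $U_A$ while the probe still ranges over all pure states. Crucially, the same conjugation sends the partial trace of $G$ over the accessed ($A$) factor to $U_E\,({\rm Tr}_A G)\,U_E^\dagger$, so the hypothesis ${\rm Tr}_B G={\rm Tr}_A G=0$, equivalently $c_{0j}=0$, is preserved. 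I would thus rotate $\boldsymbol{\hat m}\cdot\boldsymbol\sigma\to\sigma_3$ on $A$ and $\boldsymbol{\hat n}\cdot\boldsymbol\sigma\to\sigma_3$ on $E$ in the decomposition displayed before the Conjecture; the hypothesis then removes the marginal $I\otimes\sigma_3$ term, leaving $G=c_{\boldsymbol{\hat m}}\,\sigma_3\otimes I+\sum_{k=1}^3(\boldsymbol{\hat v}_k\cdot\boldsymbol\sigma)\otimes\sigma_k$. The aim is to collapse this to the representative generator of Eq.\eqref{eq:Gcase} with $t_1=0$, for which case {\bf ii)} already gives $\Delta=0$ through the explicit optimal probes \eqref{eq:optimalii}.

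The main obstacle is precisely this last collapse, i.e. the assertion of Appendix B that several mutually orthogonal cross terms contribute in the same way as a single one; local unitaries do not suffice here, because removing cross terms alters the operator structure rather than just its basis. I would attack it by producing the optimal probe directly for the multi-term generator rather than reducing the generator: generalising the anticommutator recipe of this Section, one seeks a probe $|\Psi\rangle$ spanning, together with $G|\Psi\rangle$, a two-dimensional $G$-invariant subspace whose splitting equals $\lambda_{max}-\lambda_{min}$, and then checks that, when ${\rm Tr}_A G=0$, the partial trace over $F$ preserves the full Fisher information along the resulting trajectory. The delicate point is that, unlike in Sec.\ref{Sec:qubit1}, the optimal probe is in general \emph{entangled} across $AE$: for $G=\sum_i\sigma_i\otimes\sigma_i$ (three equal cross terms, ${\rm Tr}_A G=0$, asymmetric spectrum $\{1,1,1,-3\}$) the maximally entangled probe $\tfrac12(|00\rangle+|01\rangle-|10\rangle+|11\rangle)$ yields $J_B=16=(\lambda_{max}-\lambda_{min})^2$ for every $\alpha$, so any general argument must accommodate entangled optima and the $G$-dependence of $|\Psi\rangle$. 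A complete proof would establish that the condition ${\rm Tr}_B G=0$ is exactly what makes the component of $\partial_\alpha\big(U_\alpha|\psi_0\rangle\big)$ transverse to $U_\alpha|\psi_0\rangle$ detectable on $B$ alone, thereby closing the gap; turning this geometric statement into an inequality valid for all $G$ with $c_{0j}=0$ is the step that keeps the result at the level of a conjecture.
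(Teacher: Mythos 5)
The statement here is a \emph{Conjecture}, and the paper offers no proof of it: its support consists of exactly the reduction you perform (local rotations bringing $G$ to the representative form \eqref{eq:Gcase}, whose case \textbf{ii)} with $t_1=0$ yields $\Delta=0$ via the probes \eqref{eq:optimalii}), together with the Appendix~B evidence for the single step that neither you nor the paper can establish rigorously, namely that several mutually orthogonal cross terms behave like a single one. Your proposal therefore takes essentially the same route and is correct as far as it goes; the ingredients you add --- monotonicity of the quantum Fisher information under the parameter-independent partial trace (giving $\overline{J}_B\le\overline{J}_{BF}$ unconditionally), invariance of $\overline{J}_B$, $\overline{J}_{BF}$ and of the condition ${\rm Tr}_B G=0$ under local conjugation, and the maximally entangled optimal probe for $\sum_i\sigma_i\otimes\sigma_i$ (which is Appendix~B's first entry at $t_{22}=t_{33}=1$) --- are all sound, and you correctly flag the same unproven step that keeps the result at the level of a conjecture.
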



\subsection{Multiple instances estimation}

Similarly to Sec.\ref{Sec:Multiple} we shall consider here estimation by two copies of the unitary $U_\alpha$ arising from the generator \eqref{eq:Gcase}.
Given an input state $\ket\Upsilon_{in}$ for systems $A_1E_1A_2E_2$,
the {global final state} after unitary transformations reads
\begin{eqnarray}
\ket \Upsilon_{f}= \left( e^{-i\alpha G}\otimes e^{-i\alpha G}\right) \ket\Upsilon_{i}.
\end{eqnarray}
According to Sec.\ref{Sec:Fisher}, the maximum Fisher information we can get when accessing the {whole final system} reads 
\begin{equation}\label{maxJBF2}
\overline{J}_{\boldsymbol{BF}}= 16 \left(1+(|t_1|+|t_2|)^2\right).
\end{equation}
{However, the (output) state we are interested in is} 
\begin{eqnarray}
\rho_{\boldsymbol{B}}={\rm Tr}_{\boldsymbol{F}} |\Upsilon\rangle_{f}\langle \Upsilon |.
\end{eqnarray}
To compute the maximum Fisher information related to it we 
have {to refer} to the three cases analyzed in the previous Subsection.
 
\begin{itemize}

\item[{\bf i)}]
We expect the optimal input in two instances to be among the entangled states built up 
with twofold tensor product of states \eqref{in-gen1}. 
Numerical investigations (see Appendix A) show that {there is no one state} that gives 
$\overline{J}_{\boldsymbol{B}} = 4 \overline{J}_{B}$ for all $\alpha$ (unless $t_1=0$).
Therefore we focus on the possibility of having
$2 \overline{J}_{B} \leq {J}_{\boldsymbol{B}} \leq 4 \overline{J}_{B}$,
i.e. performance always better (or equal) than separable parallel instances.
This can be achieved with the following state
\begin{equation}
\ket \Upsilon_{i}=\frac{1}{\sqrt{2}}
\left(\ket{\Psi_+(0,0)}^{\otimes 2} + \ket{\Psi_+(0,\pi)}^{\otimes 2} \right),
\label{input22}
\end{equation}
which is maximally entangled between $E_1E_2$.


\item[{\bf ii)}]
Here we expect the optimal input in two instances to be among the entangled states built up 
with twofold tensor product of states \eqref{eq:optimalii}. 
Numerical investigations (see Appendix A) show that {there is no one state} that gives 
$\overline{J}_{\boldsymbol{B}} = 4 \overline{J}_{B}$ for all $\alpha$ (unless $t_2=0$).
{This, in turn, prevents} us from having 
$\Delta=0$ when going from
single to double instance. Although that might be surprising, it can be explained by considering the new generator 
$\Gamma$ resulting in double instance:
\begin{equation}
e^{-i\alpha G}\otimes e^{-i\alpha G}=e^{-i\alpha \Gamma},
\quad
\Gamma=I\otimes G+G\otimes I.
\end{equation}
As we can see it contains identity on the accessed subsystems, and hence by referring to Conjecture \ref{conj}, 
{the possibility of having $\Delta=0$
it no longer guaranteed.}

Thus we focus on the possibility of having also here
$2 \overline{J}_{B} \leq {J}_{\boldsymbol{B}} \leq 4 \overline{J}_{B}$,
i.e. performance always better (or equal) than separable parallel instances.
This can be achieved with the following state
\begin{equation}
\ket \Upsilon_{i}=\frac{1}{\sqrt{2}}
\left(\ket{\Psi_{++}(0)}^{\otimes 2} + \ket{\Psi_{--}(0)}^{\otimes 2} \right),
\label{input23}
\end{equation}
which is maximally entangled between $A_1A_2$.


\item[{\bf iii)}]
Also in this case we expect the optimal input in two instances to be among the entangled states built up 
with twofold tensor product of states  \eqref{eq:evec1} (or \eqref{eq:evec2}). Numerical search (see Appendix A) shows that 
for each value of $\theta$ the quantity \eqref{JBt1t2} can be quadruplicated with one such a state.
For example the state 
\begin{align}
\ket \Upsilon_{i}&=\frac{1}{\sqrt{2}}
\left( \ket{\Psi_+(0)}^{\otimes 2} + \ket{\Psi_+(\pi/2)}^{\otimes 2} 
\right),
\label{input22}
\end{align}
which is maximally entangled among all parties $A_1E_1A_2E_2$, 
gives 
\begin{equation}
{J}_{\boldsymbol{B}}=16 \frac{a^2 \cos^2(2a\alpha)}{a^2-1+\cos^2(2a\alpha)}.
\end{equation}
This is 4 times the quantity in Eq.\eqref{JBt1t2} with $\theta=\pi/4$.
Thus also the gap $\Delta$ is simply quadruplicated.

\end{itemize}

Summarizing, even if the generator $G$ satisfies the conditions to get $\Delta=0$, 
it is not guaranteed that this result can be attained over multiple instances too
(this is in contrast with tensor product generator where $\Delta=0$ can be kept over multiple instances by simply using entanglement across $A$ systems).
In order to minimize the gap various kind of entanglement in the input (across $A$ systems,
or across $B$ systems, {or fully}) might be necessary.


\section{Conclusion}\label{Sec:conclu}

In conclusion, given a one-parameter family of unitaries $\{U_\alpha^{AE\to BF}\}$,  
we considered {the estimation of the parameter $\alpha$} by accessing only the system $B$. 
The estimation capabilities have been related to the properties of unitaries' generators. 
First, the continuity of quantum Fisher information has been established with respect to them.
Then, conditions on the generators of two-qubit unitaries have been found to achieve the same quantum Fisher information {we would have} in the absence of bottleneck. 
These can be summarized as the generator $G$ 
satisfying ${\rm Tr}_BG = 0$, or in other words, only containing elements of the algebra 
$\mathfrak{su}(2)$ for the first qubit.
Whenever it can be written as tensor product $G_1\otimes G_2$, it is sufficient that  
$G_1$ belongs to the special unitary algebra. In this latter case also the necessary condition has been found.
When a gap appears, it depends on the strength of terms deviating from elements 
of the algebra $\mathfrak{su}(2)$ for the first qubit.
From the analyzed cases entangled inputs across the $AE$ systems seem not {always necessary to reach the goal (it is whenever $c_{i1},c_{1i},c_{j3},c_{3j}\neq 0$ for some $i,j>0$). } In contrast, entangled inputs across multiple estimation instances enhance the performance, although not always by the celebrated scaling of the number of instances squared.
In particular, this happens when $G=G_1\otimes G_2$, thus guaranteeing, in this case, the extendibility of zero gap over multiple instances. 

{The idea put forward of relating the continuity of quantum Fisher information 
to generators could be extended to one-parameter dynamical semigroups and their generators as well (see Appendix C).
This, in turn, could enable studies on when entangled probe states enhance estimation accuracy to sub-shot noise (or Heisenberg regime) in noisy dynamics.

On another side, since} the bottleneck model employed here {can} be regarded as a two senders and one receiver quantum channel, we expect this work will {be seminal for} studies of quantum multiple-access channel estimation \cite{QMAC}.
What remains valuable for further investigation in future work is to extend the analysis to higher and/or different subsystems dimensions and see how the gap varies in terms of such dimensions. 
Even the consideration of $U_\alpha:\mathscr{H}\to\mathscr{H}$ with 
$\mathscr{H}$ of prime dimension $D$, while accessing {a final system} of dimension $d<D$,
could open new interesting perspectives.


\section*{Acknowledgments}

The work of M.R. is supported by China Scholarship Council.


\section*{Appendix A}\label{Sec:ApA}

Following up Hurwitz parametrization \cite{Hur}, we can write $N$-qubit states as
\begin{equation}\label{eq:Hur}
\sum_{n=0}^{2^N-1} \nu_n\, |\,[n]_2\,\rangle,
\end{equation}
where $[n]_2$ stands for the binary representation of $n$.
{We also have}
\begin{eqnarray}
\nu_0&=&\cos\vartheta_{2^N-1},\\
{\nu_{n>0}}&=&e^{i\varphi_n}\cos_{2^N-1-n}\prod_{\ell=2^N-n}^{2^N-1} \sin\vartheta_\ell,
\end{eqnarray}
with
\begin{equation}
\vartheta_n\in[0,\pi/2], 
\quad 
\varphi_n\in[0,2\pi].
\end{equation}
Now searching the maximum of a function over the set of states \eqref{eq:Hur} can be done by randomly sampling such states according to the Haar measure of $U(2^N)$ \cite{ZS}.
{However, in such a way, we cannot account for separable states, as this subset of states has a vanishing probability measure \cite{DLMS}.}
Therefore we opted for sampling on a grid of 50 points for $\vartheta_n$ in $[0,\pi/2]$
and 200 points for $\varphi_n$ in $[0,2\pi]$.


\section*{Appendix B}\label{Sec:ApB}

{Consider a unitary with generator
\begin{equation}
G=\sigma_1\otimes\sigma_1+t_{22}\sigma_2\otimes\sigma_2+t_{33}\sigma_3\otimes\sigma_3,
\end{equation}
where $t_{22}, t_{33}\in \mathbb{R}$. The eigenvalues of $G$ result  
$\{-1-t_{22}-t_{33}, 1+t_{22}-t_{33}, 1-t_{22}+t_{33}, -1+t_{22}+t_{33}\}$, hence the maximum Fisher information we can get when accessing both systems $B$ and $F$ is:

\begin{equation}\label{eq:maxJBF}
\overline{J}_{BF}=\left\{ \begin{array}{ccc}
4\left(1+|t_{33}|\right)^2 & & |t_{22}|\leq|t_{33}|, \; |t_{22}|<1 \\
4\left(1+|t_{22}|\right)^2 & & |t_{33}|\leq|t_{22}|, \; |t_{33}|<1 \\
4\left(t_{22}+t_{33}\right)^2 & & |t_{22}|, |t_{33}|\geq 1, \; t_{22}t_{33}>0\\
4\left(t_{22}-t_{33}\right)^2 & & |t_{22}|, |t_{33}|\geq 1, \; t_{22}t_{33}<0\\
\end{array}\right..
\end{equation}
We can obtain:
\begin{itemize}
\item
$\overline{J}_{B}=\overline{J}_{BF}=4\left(1+t_{33}\right)^2$, with input
$\frac{1}{2}\left(\ket{00}+\ket{01}-\ket{10}+\ket{11} \right)$;

\item $\overline{J}_{B}=\overline{J}_{BF}=4\left(1-t_{33}\right)^2$, with input
$\frac{1}{2}\left(\ket{00}+\ket{01}+\ket{10}-\ket{11} \right)$;

\item $\overline{J}_{B}=\overline{J}_{BF}=4\left(t_{22}+t_{33}\right)^2$, with input 
$\frac{1}{2}\left(\ket{00}+\ket{01}-\ket{10}-\ket{11} \right)$;

\item $\overline{J}_{B}=\overline{J}_{BF}=4\left(t_{22}-t_{33}\right)^2$, with input
$\frac{1}{2}\left(\ket{00}+\ket{01}+\ket{10}+\ket{11} \right)$;

\item $\overline{J}_{B}=\overline{J}_{BF}=4\left(1+t_{22}\right)^2$, with input
$\ket{01}$;

\item $\overline{J}_{B}=\overline{J}_{BF}=4\left(1-t_{22}\right)^2$, with input 
$\ket{00}$.
\end{itemize}
}


\section*{Appendix C}\label{Sec:ApC}

{
\begin{corollary}
Given $\rho_i(\alpha)=e^{\alpha {\cal L}_i}\rho$,
with ${\cal L}_i$ Liuovillian superoperators, 
we have
{(dropping the dependence from 
$\alpha$ for a lighter notation):}
\begin{align}
\left| J(\rho_1)-J(\rho_2)\right|\leq 
\left(2\pi C_1+C_2+2\pi C_2 \min\left\{ \| {\cal L}_1\|_{1\to 1}, \| {\cal L}_2\|_{1\to 1} \right\}\right)
\left\| {\cal L}_1-{\cal L}_2 \right\|_{1\to 1},
\end{align}
where $C_1,C_2$ are as in Corollary \ref{cor:continuityG}, and $\|\cdot\|_{1\to 1}$ is the induced 
1-norm on the superoperators, i.e. $\left\| {\cal L}\right\|_{1\to 1}:=\sup_{\rho: \|\rho\|_1=1} \|{\cal L}\rho\|_1$.
\end{corollary}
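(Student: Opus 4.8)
The plan is to mirror the proof of Corollary \ref{cor:continuityG}, but with the unitary-conjugation structure replaced by the contractivity of the dynamical semigroup. I would start from the bound of Theorem \ref{th:Jcontinuity},
\[
|J(\rho_1)-J(\rho_2)|\le C_1\,\|\rho_1-\rho_2\|_2+C_2\,\|\partial_\alpha\rho_1-\partial_\alpha\rho_2\|_2 ,
\]
and control each of the two norms by $\|{\cal L}_1-{\cal L}_2\|_{1\to 1}$. Since we work in finite dimension, I first pass to trace norms via $\|\cdot\|_2\le\|\cdot\|_1$, so that the induced $1\to 1$ norm becomes the natural tool throughout.

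For the state term I would write $\rho_1-\rho_2=(e^{\alpha{\cal L}_1}-e^{\alpha{\cal L}_2})\rho$ and invoke the superoperator analogue of the identity \eqref{eq:propertyexp},
\[
e^{\alpha{\cal L}_1}-e^{\alpha{\cal L}_2}=\alpha\int_0^1 e^{s\alpha{\cal L}_1}({\cal L}_1-{\cal L}_2)\,e^{(1-s)\alpha{\cal L}_2}\,ds .
\]
The crucial ingredient is that each $e^{t{\cal L}_i}$ with $t\ge0$ is a quantum dynamical semigroup, hence CPTP, hence a trace-norm contraction: $\|e^{t{\cal L}_i}\|_{1\to 1}\le1$. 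Together with $\|\rho\|_1=1$ and $\alpha\le2\pi$ this gives $\|\rho_1-\rho_2\|_1\le\alpha\|{\cal L}_1-{\cal L}_2\|_{1\to 1}\le 2\pi\|{\cal L}_1-{\cal L}_2\|_{1\to 1}$, which furnishes the $2\pi C_1$ contribution.

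For the derivative term I would use $\partial_\alpha\rho_i={\cal L}_i e^{\alpha{\cal L}_i}\rho$ and split, by adding and subtracting ${\cal L}_1 e^{\alpha{\cal L}_2}\rho$,
\[
\partial_\alpha\rho_1-\partial_\alpha\rho_2={\cal L}_1\,(e^{\alpha{\cal L}_1}-e^{\alpha{\cal L}_2})\rho+({\cal L}_1-{\cal L}_2)\,e^{\alpha{\cal L}_2}\rho .
\]
The first summand is bounded by $\|{\cal L}_1\|_{1\to 1}\cdot 2\pi\|{\cal L}_1-{\cal L}_2\|_{1\to 1}$ using the previous step, and the second by $\|{\cal L}_1-{\cal L}_2\|_{1\to 1}$, since $e^{\alpha{\cal L}_2}\rho$ is again a state. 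The complementary split (adding and subtracting ${\cal L}_2 e^{\alpha{\cal L}_1}\rho$) yields the same bound with ${\cal L}_2$ in place of ${\cal L}_1$, so one keeps the smaller of the two and obtains $\|\partial_\alpha\rho_1-\partial_\alpha\rho_2\|_1\le\bigl(1+2\pi\min\{\|{\cal L}_1\|_{1\to 1},\|{\cal L}_2\|_{1\to 1}\}\bigr)\|{\cal L}_1-{\cal L}_2\|_{1\to 1}$. Inserting both bounds into the Theorem \ref{th:Jcontinuity} inequality and collecting terms produces exactly $\bigl(2\pi C_1+C_2+2\pi C_2\min\{\|{\cal L}_1\|_{1\to 1},\|{\cal L}_2\|_{1\to 1}\}\bigr)\|{\cal L}_1-{\cal L}_2\|_{1\to 1}$.

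The only genuinely delicate point is the contractivity $\|e^{t{\cal L}_i}\|_{1\to 1}\le1$: this is where ``Liouvillian'' must be read as a GKLS generator, so that the semigroup is CPTP for $t\ge0$, and it plays the structural role that unitary invariance of the norm played in Corollary \ref{cor:continuityG}. I would also note that, because ${\cal L}_1-{\cal L}_2$ is Hermiticity-preserving, every intermediate operator to which a contraction is applied is Hermitian, so the contraction bound on states suffices; and since there is no partial trace in this setting, no dimensional factor such as $\dim{\cal H}_E$ enters, which is precisely why the resulting constant is cleaner than in Corollary \ref{cor:continuityG}.
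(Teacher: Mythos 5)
Your proof is correct and follows essentially the same route as the paper's: pass from the Hilbert--Schmidt to the trace norm, bound $\left\|\rho_1-\rho_2\right\|_1$ through the Duhamel-type bound \eqref{eq:propertyexp} together with semigroup contractivity, split the derivative difference by adding and subtracting ${\cal L}_1 e^{\alpha {\cal L}_2}\rho$, and symmetrize in ${\cal L}_1\leftrightarrow{\cal L}_2$ to obtain the minimum. If anything, your explicit appeal to CPTP (GKLS) contractivity $\left\|e^{t{\cal L}_i}\right\|_{1\to 1}\leq 1$ is a more careful justification than the paper's, which invokes only the trace-preserving property of $e^{\alpha{\cal L}_i}$.
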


\begin{proof}
Moving on from Theorem \ref{th:Jcontinuity}, for the first term at r.h.s. of Eq.\eqref{eq:Jcontinuity}, we have
\begin{align}
\left\|\rho_1-\rho_2\right\|{_2} &\leq  
\left\|\rho_1-\rho_2\right\|_1 \\
& \leq \left\| e^{\alpha {\cal L}_1}- e^{\alpha {\cal L}_2}\right\|_{1\to 1}
\label{eq:newcor2}\\
& \leq \alpha \left\| {\cal L}_1- {\cal L}_2\right\|_{1\to 1},
\label{eq:newcor3}
\end{align}
where from \eqref{eq:newcor2} to \eqref{eq:newcor3} we have used the property \eqref{eq:propertyexp} together with the fact that $e^{\alpha {\cal L}_i}$ is trace preserving. 

For the second term at r.h.s. of Eq.\eqref{eq:Jcontinuity}, instead, we have
\begin{align}
\left\|\partial_\alpha\rho_1- \partial_\alpha\rho_2\right\|_2
&=\left\| {\cal L}_1 e^{\alpha {\cal L}_1} - {\cal L}_2 e^{\alpha {\cal L}_2} \right\|_2 \\
&\leq \left\| {\cal L}_1 e^{\alpha {\cal L}_1} - {\cal L}_2 e^{\alpha {\cal L}_2} \right\|_1 \\
&\leq \left\| {\cal L}_1 e^{\alpha {\cal L}_1} - {\cal L}_2 e^{\alpha {\cal L}_2} \right\|_{1\to 1} \\
&\leq \left\| {\cal L}_1 e^{\alpha {\cal L}_1} 
-{\cal L}_1 e^{\alpha {\cal L}_2}
+{\cal L}_1 e^{\alpha {\cal L}_2}
- {\cal L}_2 e^{\alpha {\cal L}_2} \right\|_{1\to 1} \\
&\leq \left\| {\cal L}_1\right\|_{\to 1} \;
\left\|e^{\alpha {\cal L}_1} 
-e^{\alpha {\cal L}_2}\right\|_{1\to 1} 
+\left\|{\cal L}_1- {\cal L}_2\right\|_{1\to 1}\;
\left\| e^{\alpha {\cal L}_2} \right\|_{1\to 1} \label{eq:newcor4}\\
&\leq 
\left(1+\alpha \left\| {\cal L}_1\right\|_{1\to 1}\right)
\left\|{\cal L}_1- {\cal L}_2\right\|_{1\to 1},\label{eq:newcor5}
\end{align}
where from \eqref{eq:newcor4} to \eqref{eq:newcor5} we have used the property \eqref{eq:propertyexp} together with the fact that $e^{\alpha {\cal L}_i}$ is trace preserving. 
Notice that we could have reversed the role of $ {\cal L}_1$ and $ {\cal L}_2$.
Thus, by inserting Eqs.\eqref{eq:newcor3}, \eqref{eq:newcor5} into \eqref{eq:Jcontinuity} and taking into account that $\alpha\in[0,2\pi]$ we get the desired result.

\end{proof}
}


\end{document}